\numberwithin{equation}{section}
\newtheorem{Theorem}{Theorem}[section]
\newtheorem{Corollary}[Theorem]{Corollary}
\newtheorem{Lemma}[Theorem]{Lemma}
\newtheorem{Identity}[Theorem]{Identity}
\newcommand{\dd}{\mathrm{d}}
\newcommand{\e}{\mathrm{e}}
\newcommand{\dell}{\partial}
\newcommand{\C}{\mathbb{C}}
\newcommand{\R}{\mathbb{R}}
\newcommand{\Z}{\mathbb{Z}}
\def\half{{1\over 2}}
\def\sixth{{1\over6}}
\def\Eins{{\mathchoice {\rm 1\mskip-4mu l} {\rm 1\mskip-4mu l}
{\rm 1\mskip-4.5mu l} {\rm 1\mskip-5mu l}}}
\def\mn{{\mu\nu}}
\def\bef{\begin{frame}}
\def\ef{\end{frame}}
\def\benn{\begin{enumerate}\itemsep=0pt}
\def\enn{\end{enumerate}}
\def\bra#1{\langle #1 |}
\def\ket#1{| #1 \rangle}
\def\slash#1{#1\!\!\!\raise.15ex\hbox {/}}
\newcommand{\slD}{ \raise.15ex\hbox{$/$}\kern-.27em\hbox{$\!\!\!D$}}
\newcommand{\slpartial}{\raise.15ex\hbox{$/$}\kern-.57em\hbox{$\partial$}}
\newcommand{\nc}{\newcommand}
\nc{\spa}[3]{\left\langle#1 #3\right\rangle}
\nc{\spb}[3]{\left[#1 #3\right]}
\nc{\ksl}{\not{\hbox{\kern-2.3pt $k$}}}
\nc{\hf}{\textstyle{1\over2}}
\nc{\pol}{\varepsilon}
\nc{\tq}{{\tilde q}}
\nc{\esl}{\not{\hbox{\kern-2.3pt $\pol$}}}
\def\kinb{{1\over 4}\dot x^2}
\def\4piTD{{(4\pi T)}^{-{D\over 2}}}
\def\4piT4{{(4\pi T)}^{-2}}
\def\Tintm4{{\dps\int_{0}^{\infty}}\frac{{\rm d}T}{T} e^{-m^2T}
 {(4\pi T)}^{-2}}
\def\Tintm{{\dps\int_{0}^{\infty}}\frac{{\rm d}T}{T} e^{-m^2T}}
\newcommand{\slG}{{{\dot G}\!\!\!\! \raise.15ex\hbox {/}}}
\newcommand{\Gd}{{\dot G}}
\def\GBd12{{\dot G}_{B12}}
\def\dps{\displaystyle}
\begin{document}
\allowdisplaybreaks

\newcommand{\arXivNumber}{2106.12071}

\renewcommand{\thefootnote}{}

\renewcommand{\PaperNumber}{065}

\FirstPageHeading

\ShortArticleName{New Techniques for Worldline Integration}

\ArticleName{New Techniques for Worldline Integration\footnote{This paper is a~contribution to the Special Issue on Algebraic Structures in Perturbative Quantum Field Theory in honor of Dirk Kreimer for his 60th birthday. The~full collection is available at \href{https://www.emis.de/journals/SIGMA/Kreimer.html}{https://www.emis.de/journals/SIGMA/Kreimer.html}}}

\Author{James P.~EDWARDS~$^{\rm a}$, C. Moctezuma MATA~$^{\rm a}$, Uwe M\"ULLER~$^{\rm b}$ and Christian SCHUBERT~$^{\rm a}$}
\AuthorNameForHeading{J.P.~Edwards, C.M.~Mata, U.~M\"uller and C.~Schubert}

\Address{$^{\rm a)}$~Instituto de F{{\'\i}}sica y Matem\'aticas, Universidad Michoacana de San Nicol\'as de Hidalgo,\\
\hphantom{$^{\rm a)}$}~Apdo. Postal 2-82, C.P.~58040, Morelia, Michoacan, Mexico}
\EmailD{\href{mailto:james.p.edwards@umich.mx}{james.p.edwards@umich.mx},$\!$ \href{mailto:cesarfismat@gmail.com}{cesarfismat@gmail.com},$\!$
\href{mailto:christianschubert137@gmail.com}{christianschubert137@gmail.com}}

\Address{$^{\rm b)}$~Brandenburg an der Havel, Brandenburg, Germany}
\EmailD{\href{mailto:uwe.mueller.math.phys@web.de}{uwe.mueller.math.phys@web.de}}

\ArticleDates{Received March 01, 2021, in final form June 23, 2021; Published online July 03, 2021}

\Abstract{The worldline formalism provides an alternative to Feynman diagrams in the construction of amplitudes and effective actions that shares some of the superior properties of the organization of amplitudes in string theory. In~particular, it allows one to write down integral representations combining the contributions of large classes of Feynman diagrams of different topologies. However, calculating these integrals analytically without splitting them into sectors corresponding to individual diagrams poses a formidable mathematical challenge. We summarize the history and state of the art of this problem, including some natural connections to the theory of Bernoulli numbers and polynomials and multiple zeta values.}

\Keywords{worldline formalism; Bernoulli numbers; Bernoulli polynomials; Feynman diagram}

\Classification{11B68; 33C65; 81Q30}

\renewcommand{\thefootnote}{\arabic{footnote}}
\setcounter{footnote}{0}

\section{Introduction}

When I (Christian Schubert) first met Dirk, he was still a PhD student of K.~Schilcher at Mainz University, but clearly had already set out on his quest
for new mathematical structures in~perturbative QFT, which later was to become his hallmark. At the time,
our interest in common lay in the ``$\gamma_5$-problem'', i.e., the difficulty of giving a consistent and practicable definition
for the~$\gamma_5$ matrix in the framework of dimensional renormalization. Later on it shifted to the behaviour of~perturbation theory at large orders, to the computation of renormalization group functions, and to the mathematical nature
of the objects appearing there. Although we have never actually collaborated, staying in touch with Dirk over all these years
has been a fruitful and enjoyable experience.

Our topic here will be the ``worldline formalism'', which provides
an alternative to Feynman diagrams in the construction of the perturbation series in QFT based on
first-quantized relativistic particle path integrals. Introduced by Feynman in 1950/1 for QED~\cite{feynman50,feynman51},
but then largely forgotten, it has since the nineties gained some popularity following developments in string theory~\cite{berkos1,berkos2,strassler2,strassler1}.
Although here we will be concerned only with examples from QED and scalar field theory, it should be mentioned that
worldline path integrals during the last three decades have been applied to a steadily expanding circle of problems in QFT,
providing new computational options as well as useful physical intuition (for reviews, see~\cite{126,41}).

Their non-abelian generalisation was used for a calculation
of the QCD heat-kernel coefficients to fifth order~\cite{25}
and of the two-loop effective Lagrangian for a constant ${\rm SU}(2)$ background field~\cite{sascza},
as well as for obtaining tensor decompositions of off-shell gluon amplitudes~\cite{105}.
Very recently it has also been shown how to apply it to the computation of deeply inelastic structure functions~\cite{Mueller:2019qqj,tarven}.
Yukawa and axial couplings have been included using dimensional reduction~\cite{dhogag1,dhogag2,12,16} as well as direct approaches~\cite{29}.

Extending the formalism to curved space turned out to be mathematically subtle~\cite{bastianelli,bacova1,bacova2,basvan-book,bpsv:npb446}
but eventually led to the development of new efficient methods for the calculation of amplitudes involving gravitons
\cite{125,Bastianelli:2013tsa, Bastianelli:2019xhi,Bastianelli:2012bz,Bastianelli:2002fv},
gravitational corrections to the QED effective Lagrangian~\cite{Bastianelli:2008cu},
curved-space refractive indices~\cite{Hollowood:2007ku}, and the first one-loop calculation of the photon-graviton conversion amplitude in a magnetic field~\cite{Bastianelli:2007jv,Bastianelli:2004zp}.

A well-known application of worldline path integrals is to the calculation of anomalies and index densities~\cite{alvarezgaume,alvwit,bastianelli,basvan-book,cecgir,bpsv:npb446,friwin}. In~particular, a number of special cases of the
Atiyah--Singer index theorem can be reproduced in an
elementary way by rewriting supertraces of heat kernels in terms
of supersymmetric particle path integrals
\cite{alvarezgaume,alvwit,friwin}.

Beyond standard QFT, the worldline approach has shown to be a natural tool for the construction of higher-spin field interactions
\cite{Bastianelli:2012bn, Bastianelli:2007pv}, and it lends itself to generalization to
non-commutative spaces~\cite{NCU1, NCUN, Bonezzi:2012vr, Kiem:2001dm}
as well as spaces with boundary~\cite{Bastianelli:2006hq, Bastianelli:2008vh, ConfinedScalar}.

Although most applications of the formalism are based on an analytical calculation of the path integral, efficient algorithms have also been
developed for the direct numerical computation of worldline path integrals~\cite{124, gielan}. Applications include
scalar bound states~\cite{nietjoPRL}, Casimir energies with Dirichlet boundary conditions~\cite{gilamo}, Schwinger pair-creation rates~\cite{giekli} and the
curved-space heat kernel~\cite{cormur}.

Many mathematical aspects of spacetime field theory reappear in the worldline formalism in a~one-dimensional setting,
such as worldline supersymmetry~\cite{alvarezgaume,alvwit,friwin,strassler1},
worldline instantons~\cite{afalma,63}, regularisation dependence of UV counterterms (summarized in~\cite{basvan-book}),
and Fadeev--Popov determinants induced by gauge fixing~\cite{bacozi1}.

Probably the most interesting string-related feature of the worldline formalism is that, for many cases of interest,
it allows one to derive parameter integral representations summing up whole classes of Feynman diagrams (examples will be shown in Sections~\ref{section4} and~\ref{section5} below).
However, making this fact useful for actual calculations poses a mathematical
challenge, since it leads to a highly non-standard integration problem. We will summarize the progress that has been achieved
along these lines, including the solution of the problem at the polynomial level, and explain a recent algorithm
that makes it possible to analytically compute the worldline integral representing the standard one-loop scalar $N$-point diagram
together with all the ``crossed'' diagrams. A central role in worldline integration is played by the Bernoulli polynomials, which
over the years has also led to some number-theoretical spin-off, some of which will be discussed,~too.

\section{Feynman's worldline formalism}\label{section2}

We start with a short review of Feynman's worldline formalism. Consider the
Green's function for the covariantized Klein--Gordon operator $(\partial + {\rm i}\e A)^2 + m^2$,
\begin{gather*}
D^{xx'}[A] \equiv
\bigg\langle x' \bigg| \frac{1}{-(\partial + {\rm i}\e A)^2 + m^2} \bigg| x \bigg\rangle .
\end{gather*}
We work with euclidean conventions, and as usual set $\hbar = c=1$.
We exponentiate the denominator using a Schwinger proper-time parameter $T$:
\begin{gather*}
D^{xx'} [A] =\bigg\langle x'\bigg| \int_0^{\infty}{\rm d}T \exp \bigl[
{-} T \big( {-}(\partial + {\rm i}\e A)^2+m^2\big)\bigr] \bigg| x \bigg\rangle .
\end{gather*}
By a standard discretization procedure, the $x$-space matrix element can be transformed into a~path integral,
\begin{gather*}
D^{xx'} [A] =\int_0^{\infty}{\rm d}T\e^{-m^2T}
\int_{x(0)=x}^{x(T)=x'}{\cal D}x(\tau)\e^{-\int_0^T {\rm d}\tau \left(\kinb +{\rm i}\e\dot x\cdot A(x(\tau))\right)} .
\end{gather*}
Choosing the background field as a sum of $N$ plane waves,
\begin{gather}
A^{\mu}(x(\tau)) = \sum_{i=1}^N \varepsilon_i^{\mu}\e^{{\rm i}k_i\cdot x(\tau)}
\label{pw}
\end{gather}
and Fourier transforming the endpoints we get the ``photon-dressed propagator'', as shown in~Fi\-gure~\ref{fig-propexpand},
where for our present purposes it must be emphasized that
summation over the $N!$ permutations of the photons is understood.

\begin{figure}[htbp]\centering
 \includegraphics[scale=0.82]{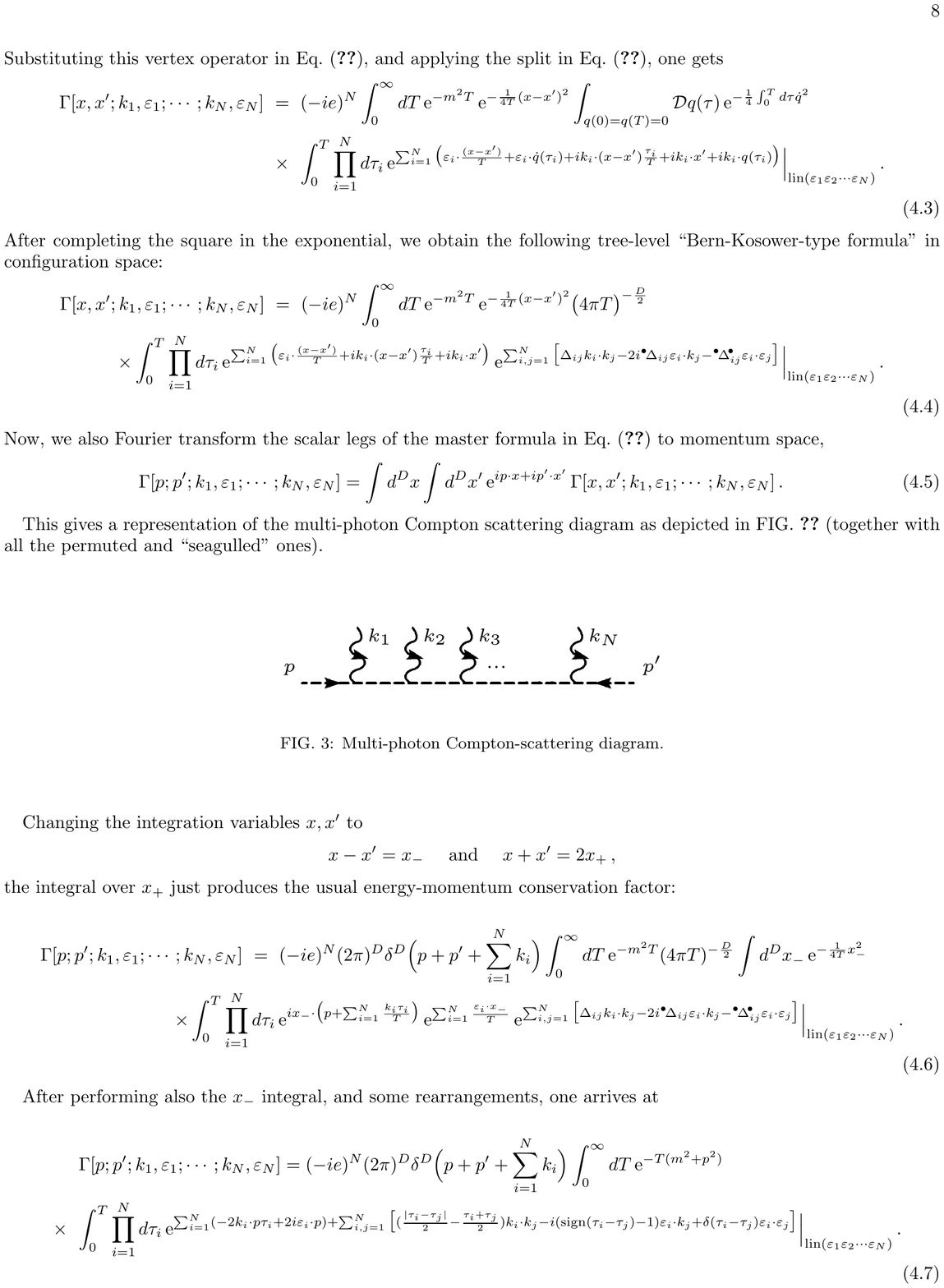}
\caption{Scalar propagator dressed with $N$-photons.}\label{fig-propexpand}
\end{figure}

For the one-loop effective action, the same procedure gives
\begin{align}
\Gamma [A]&=- \operatorname{Tr}{\rm ln} \bigl\lbrack {-}(\partial + {\rm i}\e A)^2+m^2\bigr\rbrack
= \int_0^{\infty} \frac{{\rm d}T}{T} \operatorname{Tr} \exp \bigl\lbrack {-} T \big( {-}(\partial + {\rm i}\e A)^2+m^2\big)\bigr\rbrack \nonumber
\\
&=\int_0^{\infty}\frac{{\rm d}T}{T}\e^{-m^2T}
\int_{x(0)=x(T)}{\cal D}x(\tau)\e^{-\int_0^T {\rm d}\tau \left(\kinb +{\rm i}\e\dot x\cdot A(x(\tau))\right)} .
\label{EAscal}
\end{align}
Upon specialization to the plane-wave background \eqref{pw} and an expansion to $N$th order in the
coupling, one obtains the one-loop $N$, photon amplitudes in the form
\begin{gather}
\Gamma[\lbrace k_i,\varepsilon_i\rbrace]=(-{\rm i}\e)^{N}\!\!
\int\! \frac{{\rm d}T}{T} \e^{-m^2T}\!\!\int \!{\cal D}x
V_{\rm scal}^\gamma[k_1,\!\varepsilon_1]\cdots
V_{\rm scal}^\gamma[k_N,\!\varepsilon_N] \e^{-\int_0^T{\rm d}\tau {\dot x^2\over 4}}.
\label{Nvertop}
\end{gather}
Here $V_{\rm scal}^\gamma$ denotes the same photon
vertex operator as is used in string perturbation
theory,
\begin{gather*}
V_{\rm scal}^\gamma[k,\varepsilon]
=
\int_0^T{\rm d}\tau
\varepsilon\cdot \dot x(\tau)
 {\rm e}^{{\rm i}kx(\tau)} .
\end{gather*}
From these two building blocks, arbitrary scalar QED amplitudes can be constructed by sewing pairs of photons. See, e.g.,
Figure~\ref{fig-QEDSmatrix}.

\begin{figure}[h!]\centering
 \includegraphics[scale=.50]{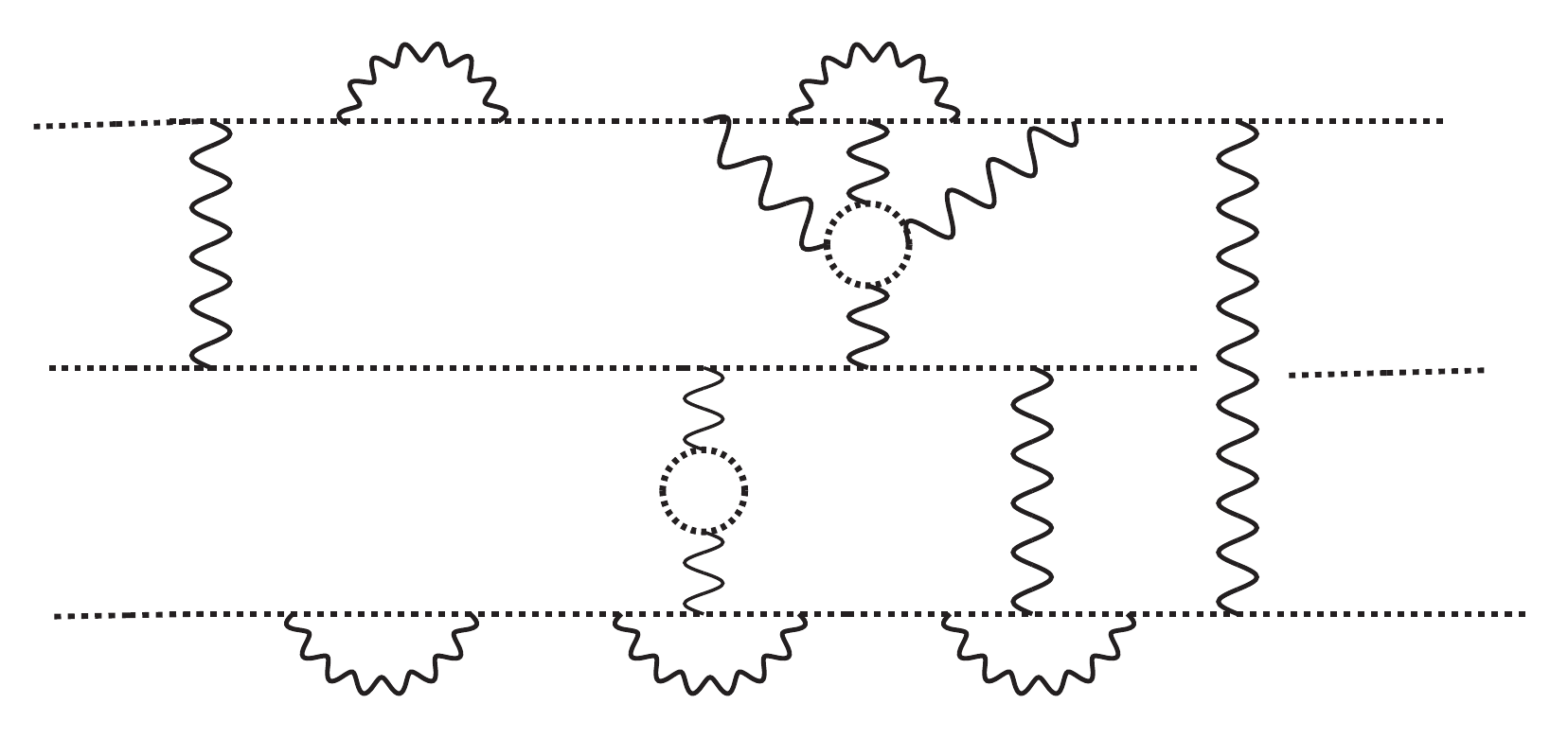}
\caption{A typical multiloop diagram in scalar QED.}\label{fig-QEDSmatrix}
\end{figure}

An advantage of this formalism is that scalar QED and spinor QED become more closely related than usual, at least in
perturbation theory.
For example, to get the effective action for spinor QED from \eqref{EAscal} requires only a change of the global
normalization by a factor of $-\half$, and the insertion of the following {\it spin factor} ${\rm Spin}[x,A]$ under the path integral,
\begin{gather}
{\rm Spin}[x,A] = {\rm tr}_{\Gamma} {\cal P}
\exp\biggl[\frac{\rm i}{4}e [\gamma^{\mu},\gamma^{\nu}]
\int_0^T{\rm d}\tau F_{\mu\nu}(x(\tau))\biggr].
\label{spinfactor}
\end{gather}
Here $\cal P$ indicates that the exponential function in general will not be an
ordinary but path-ordered one, since the exponents at different proper times will not commute
in general.

\section{Modern approach to the worldline formalism}\label{section3}

Nowadays for analytical purposes one usually does not use Feynman's spin factor in the form \eqref{spinfactor},
but replaces it by the following Grassmann path integral~\cite{fradkin66}:
\begin{gather*}
{\rm Spin}[x,A] \to\int {\cal D}\psi(\tau)\exp
\biggl\lbrack-\int_0^T{\rm d}\tau\biggl(
\half \psi\cdot \dot\psi -{\rm i}\e \psi^{\mu}F_{\mn}\psi^{\nu}\biggr)\biggr\rbrack .
\end{gather*}
Here $\psi^\mu (\tau)$ is a Lorentz-vector valued anticommuting function of the proper-time,
\begin{gather*}
\psi(\tau_1)\psi(\tau_2) = - \psi(\tau_2)\psi(\tau_1)
\end{gather*}
and it has to be functionally integrated using antiperiodic boundary conditions, $\psi(T) = - \psi(0)$.
Apart from removing the path-ordering, which will be essential for the applications to be discussed below,
it has the further advantage of exhibiting a supersymmetry between the orbital and spin degrees of freedom
of the electron (generalizing the well-known supersymmetry of the Pauli equation in quantum mechanics).
A similar approach can be followed in the non-abelian case to eliminate the path ordering due to the presence of colour factors
\cite{Ahmadiniaz:2015xoa, Balachandran:1976ya,Barducci:1976xq,Bastianelli:2013pta}.

However, it was only in the nineties, after the development of string theory, which demonstrated the
mathematical beauty and computational usefulness of first-quantized path integrals,
that Feynman's worldline path-integral formalism was finally taken seriously as a
competitor for Feynman diagrams. In~this ``string-inspired'' approach to the worldline formalism
\cite{berkos1,berkos2,strassler2,strassler1}, a~central role is played by gaussian path integration
using ``worldline Green's functions''. In~QED, the only worldline correlators required are
$G(\tau_1,\tau_2)$, $G_F(\tau_1,\tau_2)$,
obeying
\begin{gather}
\langle x^{\mu}(\tau_1)x^{\nu}(\tau_2) \rangle=-G(\tau_1,\tau_2) \delta^{\mu\nu}, \qquad
G(\tau_1,\tau_2) = \vert \tau_1 -\tau_2\vert - \frac{1}{T}\bigl(\tau_1 -\tau_2\bigr)^2,
\nonumber
\\
\langle \psi^{\mu}(\tau_1)\psi^{\nu}(\tau_2)\rangle=\half G_F(\tau_1,\tau_2) \delta^{\mu\nu} , \qquad
G_F(\tau_1,\tau_2) = {\rm sgn}(\tau_1 - \tau_2)
\label{defG}
\end{gather}
(note that in the worldline formalism ${\rm sgn}(0)=0$ in general).
For example, to calculate the path-integral \eqref{Nvertop}, first we have to make the kinetic
operator invertible, which requires eliminating the constant trajectories. This is best done by separating off the
average position of the loop
\begin{gather*}
x_0^\mu\equiv \frac{1}{T}\int_0^T{\rm d}\tau x^\mu(\tau),
\end{gather*}
whose integral will
just produce the global energy-momentum conservation factor $(2\pi)^D\delta(\sum k_i)$.
It then requires only a formal exponentiation $\varepsilon\cdot \dot x(\tau)
 {\rm e}^{{\rm i}kx(\tau)} = \e^{{\rm i}kx+\varepsilon\cdot \dot x(\tau)}\mid_{{\rm lin}(\varepsilon)}$
to arrive at a~gaussian path integral, which can be performed by a simple completion of the square.
This leads to the prototype of a ``Bern--Kosower master formula'',
\begin{gather}
\Gamma[\lbrace k_i,\varepsilon_i\rbrace]
={(-{\rm i}\e)}^N{\dps\int_{0}^{\infty}}\frac{{\rm d}T}{T}
{(4\pi T)}^{-\frac{D}{2}}e^{-m^2T}\prod_{i=1}^N \int_0^T{\rm d}\tau_i
\nonumber
\\ \hphantom{\Gamma[\lbrace k_i,\varepsilon_i\rbrace]=}
{}\times
\exp\bigg\lbrace\sum_{i,j=1}^N
\bigg\lbrack \half G_{ij} k_i\cdot k_j
+{\rm i}\dot G_{ij}k_i\cdot\varepsilon_j
+\half\ddot G_{ij}\varepsilon_i\cdot\varepsilon_j
\bigg\rbrack\bigg\rbrace
\bigg|_{{\rm lin}(\varepsilon_1,\ldots,\varepsilon_N)},
\label{bk}
\end{gather}
which is somewhat formal since, after deciding on the number of photons, one still has to expand
the exponential factor and to keep only the terms that contain each of the $N$ polarization vectors once. Besides the Green's function \eqref{defG}, now also its first two derivatives appear,
\begin{gather}
\dot G(\tau_1,\tau_2)= {\rm sgn}(\tau_1 - \tau_2)
- 2 \frac{{(\tau_1 - \tau_2)}}{T} ,\nonumber
\\
\ddot G(\tau_1,\tau_2)= 2 {\delta}(\tau_1 - \tau_2)- \frac{2}{T},
\label{Gddot}
\end{gather}
where it is understood that a ``dot'' always refers to a derivative with respect to the first variable.
The factor ${(4\pi T)}^{-{D\over 2}}$ comes from the free path integral determinant.

The master formula \eqref{bk} has many attractive features, some obvious, some less so:
\benn
\item
It provides a highly compact generating function for the $N$-photon amplitudes in scalar QED, valid off-shell and on-shell.

\item
It represents the sum of the corresponding Feynman diagrams including all non-equivalent orderings of the photons
along the loop.

\item
Bern and Kosower in their seminal work~\cite{berkos1,berkos2} found a set of rules that allows one to obtain from this master formula,
by a pure pattern matching procedure, also the corresponding amplitudes with a spinor loop, as well as the $N$-gluon amplitudes
with a scalar, spinor or gluon loop.

\item
In this formalism, the worldline Lagrangian contains only a linear coupling to the background field, corresponding to
a cubic vertex in field theory. The quartic seagull vertex arises only at the path-integration stage, and is represented
by the $\delta(\tau_i - \tau_j)$ contained in~$\ddot G_{ij}$, equation~\eqref{Gddot}.

\item
All the $\ddot G_{ij}$ can be removed by a systematic integration-by-parts procedure, which homo\-ge\-nizes the integrand
and at the same time leads to the appearance of photon field strength tensors
$f_i^{\mu\nu} \equiv k_i^{\mu}\varepsilon_i^{\nu} - \varepsilon_i^{\mu}k_i^{\nu}$,
as was noted by Strassler~\cite{strassler2}.
\enn

\section{The four-photon amplitudes}\label{section4}

Let us have a closer look at the four-photon case, which is important not only for QED but also serves as
the prototypical example for all amplitudes involving four gauge bosons. In~terms of~Feynman diagrams,
in spinor QED it is given by the familiar six diagrams shown in Figure~\ref{fig-photonphoton}, while in
scalar QED there are a few more diagrams involving the seagull vertex.

\begin{figure}[htbp]\centering
\includegraphics[scale=0.9]{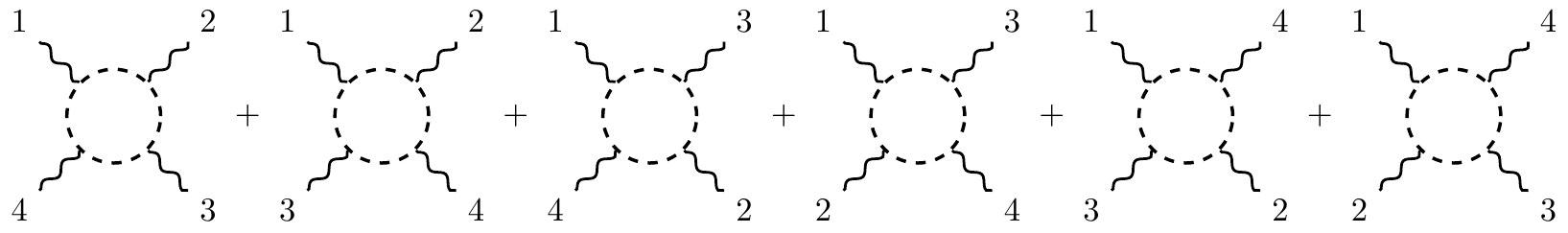}
\caption{Feynman diagrams for photon-photon scattering.}\label{fig-photonphoton}
\end{figure}

At the four-point level, the requirement of removing all the
$\ddot G_{ij}$ by no means exhausts the freedom of integration by parts, even if requesting that the full permutation symmetry
between the photons be preserved, and it is only very recently~\cite{136}
that an algorithm was found that is optimized in the sense that, apart from achieving the removal of all $\ddot G_{ij}$,
it also reduces the number of independent tensor structures arising to its minimum possible, which is five. It leads to the
following decomposition:
\begin{gather*}
\hat \Gamma_4 = \hat \Gamma_4^{(1)} + \hat \Gamma_4^{(2)} + \hat \Gamma_4^{(3)} + \hat \Gamma_4^{(4)} + \hat \Gamma_4^{(5)},
\\
\hat \Gamma_4^{(1)} = \hat \Gamma^{(1)}_{(1234)}T^{(1)}_{(1234)}
 + \hat \Gamma^{(1)}_{(1243)}T^{(1)}_{(1243)}
 + \hat \Gamma^{(1)}_{(1324)}T^{(1)}_{(1324)} ,
\\[.5ex]
 \hat \Gamma_4^{(2)} = \hat \Gamma^{(2)}_{(12)(34)}T^{(2)}_{(12)(34)}
 + \hat \Gamma^{(2)}_{(13)(24)}T^{(2)}_{(13)(24)}
 + \hat \Gamma^{(2)}_{(14)(23)}T^{(2)}_{(14)(23)},
 \\[.5ex]
 \hat \Gamma_4^{(3)}= \sum_{i=1,2,3} \hat \Gamma^{(3)}_{(123)i}T^{(3)r_4}_{(123)i}
 + \sum_{i=2,3,4} \hat \Gamma^{(3)}_{(234)i}T^{(3)r_1}_{(234)i}
 + \sum_{i=3,4,1} \hat \Gamma^{(3)}_{(341)i}T^{(3)r_2}_{(341)i}
 + \sum_{i=4,1,2} \hat \Gamma^{(3)}_{(412)i}T^{(3)r_3}_{(412)i} ,
 \\
\hat \Gamma_4^{(4)} =
\sum_{i<j} \hat \Gamma^{(4)}_{(ij)ii}T^{(4)}_{(ij)ii} +
\sum_{i<j} \hat \Gamma^{(4)}_{(ij)jj}T^{(4)}_{(ij)jj},
\\
\hat \Gamma_4^{(5)} =\sum_{i<j} \hat \Gamma^{(5)}_{(ij)ij}T^{(5)}_{(ij)ij} +
\sum_{i<j} \hat \Gamma^{(5)}_{(ij)ji}T^{(5)}_{(ij)ji}
\end{gather*}
with a set of five tensors $T^{(i)}$ that, remarkably, up to normalization agrees
with the one found in 1971 by Costantini, De Tollis and Pistoni~\cite{cotopi} using the QED Ward identity,\vspace{-1ex}
\begin{gather*}
T^{(1)}_{(1234)} \equiv Z_4(1234),
\\
T^{(2)}_{(12)(34)} \equiv Z_2(12)Z_2(34),
\\
T^{(3)r_4}_{(123)i} \equiv Z_3(123) \frac{r_4\cdot f_4\cdot k_i }{r_4\cdot k_4}, \qquad i=1,2,3 ,
\\
T^{(4)}_{(12)ii} \equiv Z_2(12) \frac{k_i \cdot f_3 \cdot f_{4} \cdot k_i}{k_3\cdot k_4}, \qquad i=1,2 ,
\\
T^{(5)}_{(12)ij} \equiv Z_2(12) \frac{k_i \cdot f_3 \cdot f_{4} \cdot k_j}{k_3\cdot k_4},\qquad (i,j)=(1,2),(2,1),
\end{gather*}
where we abbreviated\vspace{-1ex}
\begin{gather*}
Z_2(ij)\equiv
\half {\rm tr}\bigl(f_if_j\bigr) = \varepsilon_i\cdot k_j\varepsilon_j\cdot k_i - \varepsilon_i\cdot\varepsilon_jk_i\cdot k_j,
\\
Z_n(i_1i_2\cdots i_n)\equiv {\rm tr}\bigg(\prod_{j=1}^nf_{i_j}\bigg),
\qquad n\geq 3.
\end{gather*}
Note that the tensor $T^{(3)r_4}_{(123)i}$ still depends on a ``reference momentum''~$r_4$, which is arbitrary except for the
condition $r_4\cdot k_4 \ne 0$.
The coefficient functions are given by~\cite{136}
\begin{gather*}
\hat\Gamma^{(k)}_{\cdots}=	\int_0^\infty \frac{{\rm d}T}{T} T^{4-\frac{D}{2}}\e^{-m^2T}
\int_0^1\prod_{i=1}^4{\rm d}u_i \hat \gamma^{(k)}_{\ldots}\big(\Gd_{ij}\big)	 \e^{T\sum_{i<j=1}^4G_{ij}k_i\cdot k_j},
\end{gather*}
where, for spinor QED
\begin{gather*}
\hat \gamma^{(1)}_{(1234)} = \Gd_{12}\Gd_{23}\Gd_{34}\Gd_{41} - G_{F12}G_{F23}G_{F34}G_{F41},
\\
\hat \gamma^{(2)}_{(12)(34)} = \bigl(\Gd_{12}\Gd_{21} - G_{F12}G_{F21}\bigr) \bigl(\Gd_{34}\Gd_{43} - G_{F34}G_{F43}\bigr),
\\
\hat \gamma^{(3)}_{(123)1} = \bigl(\Gd_{12}\Gd_{23}\Gd_{31} - G_{F12}G_{F23}G_{F31}\bigr) \Gd_{41},
\\
\hat \gamma^{(4)}_{(12)11} = \bigl(\Gd_{12}\Gd_{21} - G_{F12}G_{F21}\bigr) \Gd_{13}\Gd_{41},
 \\
\hat \gamma^{(5)}_{(12)12} = \bigl(\Gd_{12}\Gd_{21} - G_{F12}G_{F21}\bigr) \Gd_{13}\Gd_{42}
\end{gather*}
(plus permutations thereof),
and the coefficient functions for scalar QED are obtained from these simply by deleting all the $G_{Fij}$. In~a forthcoming series of papers~\cite{4photon} this representation is used
for a first calculation of the scalar and spinor QED four-photon amplitudes completely off-shell.
The results should become useful for higher-order calculations in QED with four-photon sub-diagrams,
as well as for photonic processes in external fields.

\section{On to multiloop}\label{section5}

For example, from the four-photon amplitude we can, by sewing, construct the two-loop photon propagator, Figure~\ref{fig-2loopvpdiag}
\begin{figure}[htbp]\centering
\includegraphics[scale=0.6]{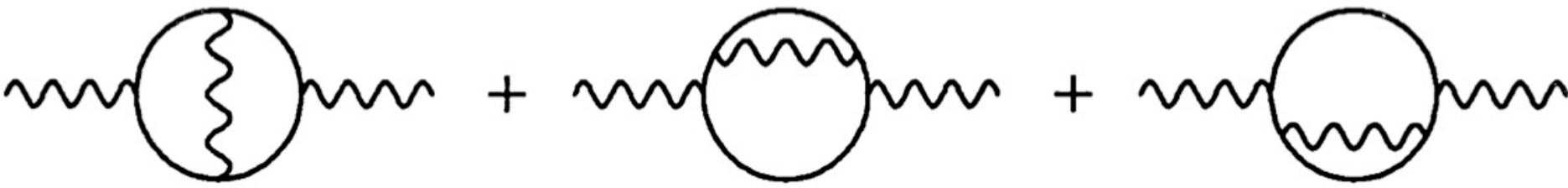}
\caption{The two-loop photon propagator.}\label{fig-2loopvpdiag}
\end{figure}

\noindent
and from the one-loop six-photon amplitude we get the three-loop quenched propagator, Fi\-gure~\ref{fig-3loopbetadiag},

\begin{figure}[htbp]\centering
\includegraphics[scale=0.6]{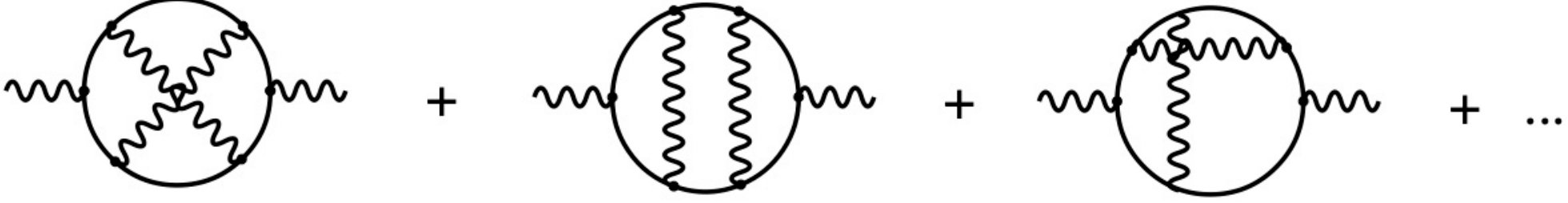}
\caption{The three-loop quenched photon propagator.}\label{fig-3loopbetadiag}
\end{figure}
\noindent
etc. The interesting feature of the worldline formalism is that the sewing results in parameter integrals that
represent not some particular Feynman diagram, but the whole set of Feynman diagrams shown in Figures~\ref{fig-2loopvpdiag} and~\ref{fig-3loopbetadiag}.
And it is precisely this type of sums of diagrams that have become notorious in the QED community for the particularly
extensive cancellations between diagrams that were discovered as a byproduct of the enormous effort
that was invested from the sixties onward in obtaining information on the high-order behaviour of the QED $\beta$ function.
Both physically and mathematically, the most interesting contributions to this function come from the
quenched (one fermion-loop) diagrams, and this ``quenched QED $\beta$ function''
was a big topic at the Multiloop Workshop that took place at the Aspen Center of Physics in 1995,
where besides Dirk also D. Broadhurst, K. Chetyrkin, A. Davydychev and other experts in multiloop integration were present.
The coefficients known at the time (up to four loops in spinor QED,
up to three loops in scalar QED) were all rational~\cite{brdekr}:
\begin{gather*}
\beta_3^{\rm spin} = -2
\end{gather*}
after a cancellation of $\zeta(3)$ between diagrams,
\begin{gather*}
\beta_4^{\rm spin} = - 46
\end{gather*}
after a cancellation of $\zeta(3)$s and $\zeta(5)$s. And in scalar QED, too, the $\zeta(3)$ contributions cancelled at three loops,
\begin{gather*}
\beta_3^{\rm scal}= \frac{29}{2}.
\end{gather*}
Thus at the time everybody believed that the quenched coefficients would stay rational to all loop orders
and eventually might be computable in closed form, but then many years later at five loops
$\zeta(3)$ refused to cancel out in $\beta_5^{\rm spin}$~\cite{bckr}.
The Lord sometimes {\it can} be malicious.
Still, although incomplete the cancellations are remarkable and ought to find an explanation!

\section{The fundamental problem of worldline integration}\label{section6}

Returning to the one-loop level, using that $G_{Fij}$s always appear in closed chains and can be eliminated by
\begin{gather*}
G_{Fij}G_{Fjk}G_{Fki} = - \big(\dot G_{ij} + \dot G_{jk} + \dot G_{ki}\big)
\end{gather*}
 the most general integral that one will ever have to compute in the worldline approach to QED
is of the form
\begin{gather*}
\int_0^1{\rm d}u_1{\rm d}u_2 \cdots {\rm d}u_N {\rm Pol} \big(\dot G_{ij}\big) \e^{\sum_{i<j=1}^N \lambda_{ij}^2 G_{ij} }
\end{gather*}
with arbitrary $N$ and polynomial ${\rm Pol}\big(\dot G_{ij}\big)$, where
\begin{gather*}
G_{ij} = |u_i-u_j| - (u_i-u_j)^2, \qquad
\dot G_{ij} = {\rm sgn}(u_i-u_j) - 2(u_i-u_j) .
\end{gather*}
For this to become eventually useful in addressing the type of multiloop cancellations that we have just reviewed,
one should devise a way of performing such integrals {\it without decomposing the integrand into ordered sectors}.

At the polynomial level, this turned out to be a quite tractable problem. By Fourier analysis, it is easy to show that
integrating a polynomial expression in $\dot G_{ij}$ ``full circle'' in any particular variable $u_i$ yields again a
polynomial expression in the remaining $\dot G_{ij}$. To give just one particularly neat example,
\begin{gather*}
\int_0^1 {\rm d}u\,\dot G(u,u_1)\dot G(u,u_2)\dot G(u,u_3)
=-\sixth\big(\Gd_{12}-\Gd_{23}\big)\big(\Gd_{23}-\Gd_{31}\big)\big(\Gd_{31}-\Gd_{12}\big).
\end{gather*}
However, the result can generally be written in many different equivalent ways due to identities such as
\begin{gather*}
\big(\dot G_{ij}+\dot G_{jk} +\dot G_{ki}\big)^2 = 1 .
\end{gather*}
Eventually, using the master integral
\begin{gather*}
\int_0^1 {\rm d}u\,\e^{\sum_{i=1}^n c_i \dot G(u,u_i)}
=\frac{\sum_{i=1}^n \sinh (c_i)\e^{\sum_{j=1}^n c_j \dot G_{ij}}}{\sum_{i=1}^n c_i}
\end{gather*}
with constants $c_1,\ldots,c_n$
the following formula was found which allows one to integrate an arbitrary monomial in $\dot G_{ij}$
in one of the variables, and gives the result as a polynomial in the remaining $\dot G_{ij}$:
\begin{gather*}
\int_0^1 {\rm d}u\,\dot G(u,u_1)^{k_1}\dot G(u,u_2)^{k_2}\cdots\dot G(u,u_n)^{k_n}
\\ \qquad
{}=
\frac{1}{2n}\sum_{i=1}^n\prod_{j\ne i}\sum_{l_j=0}^{k_j}{k_j\choose l_j}
\dot G_{ij}^{k_j-l_j}\sum_{l_i=0}^{k_i}{k_i\choose l_i}
\frac{(-1)^{\sum_{j=1}^n l_j}}{(1+\sum_{j=1}^n l_j)n^{\sum_{j=1}^n l_j}}
\\ \qquad\phantom{=}
{}\times\Biggr\lbrace\!\bigg(\sum_{j\ne i}\dot G_{ij} +1 \bigg)^{1+\sum_{j=1}^n l_j}
- (-1)^{k_i-l_i}\bigg(\sum_{j\ne i}\dot G_{ij} -1\bigg)^{1+\sum_{j=1}^n l_j}\Biggr\rbrace.
\end{gather*}
Thus by recursion it provides a complete solution to the ``circular integration'' problem at the polynomial level.

\section{Bernoulli numbers and polynomials}\label{section7}

Next, let us emphasize that worldline integration naturally relates to the theory of Bernoulli numbers and polynomials.
This is because the coordinate path integral was performed in the Hilbert space $ H_P'$ of periodic functions orthogonal to the constant functions (because of the elimination of the zero mode). In~this space the ordinary $n$th derivative $\partial_P^n$ is invertible, and the integral
kernel of the inverse is given essentially by the $ n$th Bernoulli polynomial $ B_n(x)$:
\begin{gather}
\langle u\mid {\partial}^{-n}_P \mid u'\rangle
= -\frac{1}{n!}B_n(\vert u-u'\vert){\rm sgn}^n(u-u'),\qquad n\geq 1,
\label{deln}
\\
\bra{u} \partial^{0}\ket{u'} = \delta(u-u') -1.
\label{invder}
\end{gather}
This fact is related to the well-known Fourier series
\begin{gather*}
B_n(x) = - \frac{n!}{(2\pi {\rm i})^n} \sum_{k\ne 0} \frac{\e^{2\pi {\rm i} k x}}{k^n}
\end{gather*}
although in the mathematical literature it is usually assumed that $0<x<1$, which eliminates the $\rm sgn$ and $\delta$
functions in \eqref{deln}, \eqref{invder}, that is, just the fun stuff that makes these inverse derivatives
into a well-defined algebra of integral operators in the Hilbert space $ H_P'$.

Note that the left-hand side of \eqref{deln} can also be written in terms of the $\dot G_{ij}$ as a ``chain integral'',
\begin{gather*}
\big\langle u_1\mid {\partial}^{-n}_P \mid u_{n+1}\big\rangle = \frac{1}{2^n}
\int_0^1 {\rm d}u_2\cdots {\rm d}u_n\dot G_{12}\dot G_{23}\cdots\dot G_{n(n+1)}
\end{gather*}
since the integral just constructs a multiple inverse derivative by iteration of a single one.

Thus the ubiquitous appearance of the Bernoulli numbers $B_n$ $(=B_n(0))$ in perturba\-tive~QFT, which from the diagrammatic point of view
often remains somewhat mysterious, in the worldline formalism finds a natural explanation.

As a nice example for a non-trivial occurrence of the Bernoulli numbers in QED amplitudes,
let us show here the ``all +'' amplitudes in scalar and spinor QED,
in the low-energy (LE) limit, at one and two loops~\cite{51,56}
\begin{gather*}
\Gamma_{\rm spin}^{(1,2)(LE)}({\rm all} +)=-\frac{\alpha \pi m^4}{8\pi^2}
\bigg(\frac{2e}{m^2}\bigg)^N
 c_{\rm spin}^{(1,2)}\bigg(\frac{N}{2}\bigg)\chi_N^+,
\\
\Gamma_{\rm scal}^{(1,2)(LE)}({\rm all} +)={\alpha\pi m^4\over 16\pi^2}
\bigg(\frac{2e}{m^2}\bigg)^N c_{\rm scal}^{(1,2)}\bigg(\frac{N}{2}\bigg)\chi_N^+.
\end{gather*}
Here $\chi_N^+$ is a spinorial invariant that absorbs all the momenta and polarization vectors, and can
be constructed explicitly in the spinor-helicity formalism, see~\cite{56}.
The one-loop coefficients are equal,
\begin{gather*}
c^{(1)}_{\textrm{spin}}(n) = \frac{(-1)^{n+1}B_{2n}}{2n(2n-2)} = c^{(1)}_{\textrm{scal}}\left(n \right)
\end{gather*}
however this degeneracy is lifted at the two-loop level,
\begin{gather*}
c^{(2)}_{\rm spin}(n) =\frac{1}{(2\pi)^2}\biggl\lbrace
\frac{2n-3}{2n-2} {B}_{2n-2}
+3\sum_{k=1}^{n-1}\frac{{B}_{2k}}{2k}
\frac{{B}_{2n-2k}}{2n-2k}\biggr\rbrace,
\\
c^{(2)}_{\rm scal}(n)=\frac{1}{(2\pi)^2}\biggl\lbrace
\frac{2n-3}{2n-2} {B}_{2n-2}+\frac{3}{2}\sum_{k=1}^{n-1}
\frac{{B}_{2k}}{2k}\frac{{B}_{2n-2k}}{2n-2k}
\biggr\rbrace.
\end{gather*}
(The two-loop result is actually the quenched contribution only, there is also a non-quenched one~\cite{118,giekar}.)

\section{The Miki and Faber--Pandharipande--Zagier identities}\label{section8}

The above two-loop result actually led to some mathematical spin-off which merits a digression.
It was not obtained by a direct calculation of the $N$-photon amplitudes, but by a calculation of the scalar and spinor
QED effective Lagrangian at two loops in a (Euclidean) self-dual background field by G.V.~Dunne and one of the authors in 2002~\cite{51,52}.
Let us now focus on the scalar QED case. Doing this calculation in two different ways, both using the worldline formalism,
we obtained there two different formulas in terms of Bernoulli numbers for the same weak-field expansion coefficients,
\begin{gather}
c^{(2)}_{\rm scal}(n)=\frac{1}{(2\pi)^2}\biggl\lbrace
\frac{2n-3}{2n-2} B_{2n-2}+\frac{3}{2}\sum_{k=1}^{n-1}
\frac{B_{2k}}{2k}\frac{B_{2n-2k}}{2n-2k}\biggr\rbrace
\label{cn2app}
\end{gather}
vs.
\begin{gather}
c^{(2)}_{\rm scal}(n)=\frac{1}{(2\pi)^2}\biggl\lbrace
\frac{2n-3}{2n-2} B_{2n-2}
+3\bigg[\psi(2n+1)-\frac{2}{2n-1}+\gamma-1\bigg]\frac{B_{2n}}{2n}\nonumber
\\ \hphantom{c^{(2)}_{\rm scal}(n)=\frac{1}{(2\pi)^2}\biggl\lbrace}
{}+3\sum_{k=1}^{n-1}{{2n-2}\choose{2k-2}}\frac{B_{2k}}{2k}\frac{B_{2n-2k}}{2n-2k}\biggr\rbrace.
\label{cn2alt}
\end{gather}
Here
$\psi(x)=\Gamma'(x)/\Gamma(x)$, and $\gamma$ is Euler's constant.

Although there is a rather enormous mathematical literature on identities involving Bernoulli numbers,
we could not find anything that would have allowed us to show the equivalence of~these two expressions.
Dirk suggested to consult Richard Stanley from MIT, a leading expert in~combinatorics, and indeed he
showed us how to demonstrate the equivalence by combining the probably best-known of Bernoulli number identities, Euler's identity
\begin{gather*}
\sum_{k=1}^{n-1}{2n\choose 2k}B_{2k}B_{2n-2k}=-(2n+1)B_{2n}
\end{gather*}
with a much deeper identity due to Miki (1977):
for integer $n\geq 2$,
\begin{gather*}
\sum_{k=1}^{n-1}\frac{B_{2k}B_{2n-2k}}{2k(2n-2k)}
=\sum_{k=1}^{n-1}\frac{B_{2k}B_{2n-2k}}{2k(2n-2k)}{2n\choose 2k} +\frac{B_{2n}}{n}H_{2n},
\end{gather*}
where $H_i$ denotes the $i$th harmonic number.

At about the same time, we learned about a similar identity that was found heuristically by Faber and Pandharipande in 1998
in a string theory calculation, and then proven by Zagier~\cite{fapaza}: for integer $n\geq 2$,
\begin{gather*}
\sum_{k=1}^{n-1}\frac{\bar{B}_{2k}\bar{B}_{2n-2k}}{2k(2n-2k)}
=\frac{1}{n}\sum_{k=1}^{n}\frac{B_{2k}\bar{B}_{2n-2k}}{2k}{2n\choose 2k}
+\frac{\bar{B}_{2n}}{n}H_{2n-1},
\end{gather*}
where we have introduced the further abbreviation
\begin{gather*}
\bar{B}_n\equiv \bigg(\frac{1-2^{n-1}}{2^{n-1}}\bigg)B_n.
\end{gather*}

In~\cite{59}, we reanalyzed our worldline derivation of \eqref{cn2app} and \eqref{cn2alt},
and used it as a guiding principle to
\benn
\item
Give a unified proof of the Miki and Faber--Pandharipande--Zagier identities.

\item
To generalize them at the quadratic level.

\item
Generalize them to the cubic level.

\item
Outline the construction of even higher-order identities.
\enn

\section{A toy QFT for multiple zeta-value identities}\label{section9}

As a further mathematical digression, let me mention that the above algebra of inverse derivatives
can be mathematically enriched by changing the Hilbert space from $H_P'$ to one ``chiral'' half,
that is, the Hilbert space $H_P^+$ generated by the basis $\e^{2\pi {\rm i} k x}$ with positive integers $k$.
The inverse derivative $\langle u\mid {\partial}^{-n}_P \mid u'\rangle$ then acquires also an imaginary
part, which up to normalization is the $n$th Clausen function ${\rm Cl}_n(2\pi (u-u'))$.

In~\cite{34} two of the authors along these lines constructed a worldline QFT geared towards the derivation of identities
between multiple zeta values, defined by
\begin{gather*}
\zeta(k_1,\ldots,k_m) = \sum_{n_1>n_2>\cdots >n_m >0} \frac{1}{n_1^{k_1}\cdots n_m^{k_m}} .
\end{gather*}
In this model, arbitrary such values can be represented as ``seashell'' Feynman diagrams, and
large classes of identities between them can be derived through a systematic integration-by-parts
procedure applied to the corresponding Feynman integrals.

\section[General worldline integral in phi3 theory]{General worldline integral in $\boldsymbol{\phi^3}$ theory}\label{section10}

Finally, let us return to the problem of circular integration, and our ongoing work, where we are trying to~solve it for the simplest non-polynomial case, the off-shell one-loop $N$-point amplitude for
scalar $\phi^3$-theory in $D$ dimensions. For this case the worldline master formula reads~\cite{polbook},
\begin{gather*}
I_N(p_1,\ldots,p_N)=
\half (4 \pi)^{-D/2} (2\pi)^D\delta\bigg(\sum_{i=1}^N p_i\bigg)\hat I_N(p_1,\ldots,p_N),
\\
\hat I_N(p_1,\ldots,p_N)
=\int_0^\infty \frac{{\rm d}T}{T}T^{N-D/2} e^{-m^2 T} \int_{12\ldots N}\e^{T\sum_{i<j=1}^N G_{ij} p_{ij}},
\end{gather*}
where we abbreviated
$\int_{12\ldots N}\equiv \int_0^1 {\rm d}u_1 \cdots \int_0^1 {\rm d}u_N$
and~$p_{ij} \equiv p_i\cdot p_j$.

The ordered integrals lead off-shell to hypergeometric functions such as
$_2F_1$ (2-point), $F_1$ (3-point), the Lauricella--Saran function (4-point), see, e.g.,~\cite{davydychev,fljeta,pharie}.

As a first attempt at calculating the unordered integrals, one might try a brute-force approach
expanding all the exponential factors, and computing
\begin{gather*}
I_N({n_{12},n_{13},\ldots ,n_{(N-1)N}})
\equiv\int_{12\ldots N}\prod_{i<j=1}^N G_{ij}^{n_{ij}}.
\end{gather*}
However, although these integrals are individually trivial, it is by no means easy to arrive at a~closed-form
expression that eventually might allow one to perform a resummation. Instead, we~have found it more promising to
expand each exponential factor in terms of the inverse derivative operators above.

\section{Expansion in inverse derivatives}\label{section11}

Thus, we expand each exponential using the identity (whose proof is given in the appendix)
\begin{gather}
\e^{p_{ab} G_{ab}}=1+2 \sum_{n=1}^\infty p_{ab}^{n-1/2}H_{2n-1} \bigg(\frac{\sqrt{p_{ab}}}{2}\bigg)
\overline{\bra{u_a} \partial^{-2n} \ket{u_b}}.
\label{idcore}
\end{gather}
Here the $ H_n(x)$ are Hermite polynomials, and we have abbreviated
\begin{gather*}
\overline{\bra{u_a} \partial^{-2n} \ket{u_b}}\equiv
\bra{u_a} \partial^{-2n} \ket{u_b} - \bra{u_a} \partial^{-2n} \ket{u_a}.
\end{gather*}
By integration, this also gives
\begin{gather*}
\frac{\sqrt{\pi}}{2x}{\rm erf}(x) \e^{x^2} = 1+ \sum_{n=1}^\infty 2^{2n}\hat B_{2n} x^{2n-1}H_{2n-1}(x)
\end{gather*}
$\big(\hat B_n \equiv \frac{B_n}{n!}\big)$.
Curiously, we have not been able to find this simple series representation of the error function in the mathematical literature.

Let us have a look at the three-point case. Here using \eqref{idcore} in each factor leads to
\begin{gather*}
{\rm e}^{p_{12} G_{12}+p_{13} G_{13}+p_{23} G_{23}}=
\bigg\{1+2 \sum_{i=1}^\infty p_{12}^{{\rm i}-\frac{1}{2}}H_{2i-1} \bigg( \frac{\sqrt{p_{12}}}{2}\bigg)
\big[\bra{u_1} \partial^{-2i} \ket{u_2} + \hat{B}_{2i}\big] \bigg\}
\\ \hphantom{{\rm e}^{p_{12} G_{12}+p_{13} G_{13}+p_{23} G_{23}}=}
{}\times \bigg\{1+2 \sum_{j=1}^\infty p_{13}^{j-\frac{1}{2}}H_{2j-1}\bigg(\frac{\sqrt{p_{13}}}{2}\bigg) \big[\bra{u_1} \partial^{-2j}\ket{u_3} + \hat{B}_{2j} \big]\bigg\}
\\ \hphantom{{\rm e}^{p_{12} G_{12}+p_{13} G_{13}+p_{23} G_{23}}=}
{}\times\bigg\{ 1+2 \sum_{k=1}^\infty p_{23}^{k-\frac{1}{2}}H_{2k-1} \bigg( \frac{\sqrt{p_{23}}}{2}\bigg) \big[ \bra{u_2} \partial^{-2k} \ket{u_3} + \hat{B}_{2k}\big] \bigg\}.
\end{gather*}
Since $\int_0^1 {\rm d}u_i \bra{u_i} \partial^{-2n} \ket{u_j} = \int_0^1 {\rm d}u_j \bra{u_i} \partial^{-2n} \ket{u_j} = 0$, the three
$\bra{u_i} \partial^{-2n} \ket{u_j}$ must go
together, and then we can apply the completeness relation $\int_0^1du \ket{u} \bra{u} = \Eins$ to get
\begin{gather*}
\int_{123} \bra{u_1} \partial^{-2i}\ket{u_2} \bra{u_2} \partial^{-2k}\ket{u_3} \bra{u_3} \partial^{-2j}\ket{u_1} =\operatorname{Tr}\big( \partial^{-2(i+j+k)}\big) = -\hat{ B}_{2(i+j+k)},
\end{gather*}
where we have used \eqref{deln} in the last step. In~this way we get a closed form-expression for the $ N=3$ coefficients,
\begin{align*}
I_3(a,b,c)&\equiv
\int_{123} G_{12}^a G_{13}^b G_{23}^c
\\
&= a! b! c! \sum_{i=\lfloor 1+a/2 \rfloor}^a \sum_{j=\lfloor 1+b/2 \rfloor}^b \sum_{k=\lfloor 1+c/2 \rfloor}^c h_i^a h_j^b h_k^c\big( \hat{B}_{2i}\hat{B}_{2j}\hat{B}_{2k}- \hat{B}_{2(i+j+k)} \big).
\end{align*}
In writing this identity we have assumed that $a$, $b$, $c$ are all different from zero. The coefficients~$h_i^a$ can be found from the rearrangement
\begin{gather*}
2\sum_{i=1}^{\infty} \lambda^{i-\frac{1}{2}} H_{2i-1} \bigg(\frac{\sqrt{\lambda}}{2}\bigg)
{\hat B}_{2i}= \sum_{a=1}^{\infty} \lambda^a \sum_{i=\lfloor 1+a/2 \rfloor}^a h_i^a {\hat B}_{2i}.
\end{gather*}
From the explicit formula for the Hermite polynomials
\begin{gather*}
H_n(x) = n! \sum_{m=0}^{\lfloor{\frac{n}{2}\rfloor}}
(-1)^m\frac{(2x)^{n-2m}}{m!(n-2m)!}
\end{gather*}
we find
\begin{gather*}
h_i^a = (-1)^{a+1} \frac{2(2i-1)!}{(2i-a-1)!(2a-2i+1)!}.
\end{gather*}
Differently from the three-point case, for $N=4$ we encounter, apart from ``chain integrals'', the ``cubic worldline vertex''
\begin{gather*}
V_{3}^{{\rm i}jk} \equiv \int_0^1 {\rm d}u \bra{u} \partial^{-i}\ket{u_1} \bra{u} \partial^{-j}\ket{u_2} \bra{u} \partial^{-k}\ket{u_3} .
\end{gather*}
This vertex can be computed by partial integration as follows:
\begin{gather*}
V_{3}^{{\rm i}jk} \equiv \int_0^1 {\rm d}u \bra{u} \partial^{-i}\ket{u_1} \bra{u} \partial^{-j}\ket{u_2} \bra{u} \partial^{-k}\ket{u_3}
\\ \hphantom{V_{3}^{{\rm i}jk}}
{}=\sum_{a=i}^{{\rm i}+j-1}\!(-1)^a {{a-1}\choose{i-1}}\!\!
\int_0^1\!\! {\rm d}u \bra{u} \partial^{0}\ket{u_1} \bra{u} \partial^{-(i+j-a)}\ket{u_2} \bra{u} \partial^{-(k+a)}\ket{u_3}\!+\!\lbrace i\leftrightarrow j, u_1 \leftrightarrow u_2 \rbrace
\\ \hphantom{V_{3}^{{\rm i}jk}}
{}=\!\sum_{a=i}^{{\rm i}+j-1}\!(-1)^a {{a-1}\choose{i-1}}\big\lbrack \bra{u_1} \partial^{-(i+j-a)}\ket{u_2} \bra{u_1} \partial^{-(k+a)}\ket{u_3}\! -\! (-1)^{{\rm i}+j-a}\bra{u_2} \partial^{-(i+j+k)}\ket{u_3}\big\rbrack
\\ \hphantom{V_{3}^{{\rm i}jk}}
{} +\!\sum_{a=j}^{{\rm i}+j-1}\!(-1)^a {{a-1}\choose{j-1}}\big\lbrack \bra{u_2} \partial^{-(i+j-a)}\ket{u_1} \bra{u_2} \partial^{-(k+a)}\ket{u_3}
 \!-\! (-1)^{{\rm i}+j-a}\bra{u_1} \partial^{-(i+j+k)}\ket{u_3}\big\rbrack.
\end{gather*}
In the second step \eqref{invder} was used.
Here we are, incidentally, just reusing one of the algorithms that was used in the toy model described above to obtain
alternative representations for a~given multiple zeta value.
This algorithm generalizes to the worldline vertex $V_m^{i_1i_2\cdots i_m}$. In~the calculation of the $N$-point function, the worldline vertices $V_3,V_4,\ldots,V_{N-1}$ will be needed.

\section{Outlook}\label{section12}

Although the results that we have presented here on scalar $N$-point functions are still preliminary,
what is already clear is that the expansion \eqref{idcore} together with integration by parts allows one to
perform the unordered integrals, in principle for any $N$. The challenge then becomes to identify the
resulting multiple sums with known hypergeometric functions. As a byproduct, we will get
closed formulas for the basic coefficients $I_N\big({n_{12},n_{13},\ldots ,n_{(N-1)N}}\big)$. Those seem less relevant
for the momentum-space amplitudes, but fundamental for the associated heat-kernel expansion of the
effective action in $x$-space~\cite{6}.
Finally, the property of the Bernoulli numbers to converge rather fast towards their asymptotic limit,
\begin{gather*}
B_{2n} \sim (-1)^{n+1}2{(2n)!\over (2\pi)^{2n}}
\end{gather*}
might become useful for approximations.

\appendix

\section{Proof of identity (\ref{idcore})}\label{appendixA}

\noindent The following identity shall be proved.
\begin{Identity}
\begin{gather*}
 \e^{pG_{ab}}=1+2\sum_{n=1}^\infty p^{n-\frac{1}{2}}
 H_{2n-1}\bigg(\frac{\sqrt{p}}{2}\bigg)
 \overline{\left\langle u_a|\dell^{-2n}|u_b\right\rangle},
 \end{gather*}
 where $u_a$ and $u_b$ are $1$-periodic coordinates on a circle with
 length $1$, $G_{ab}=|u_a-u_b|-(u_a-u_b)^2$ $(|u_a|,|u_b|<1)$,
 $H_n(x)$ Hermite polynomials, $p$ some parameter, $\dell^{-1}$
 the inverse derivative on the circle with zero integration constant and
\begin{gather*}
\overline{\left\langle u_a|\dell^{-2n}|u_b\right\rangle}
= \left\langle u_a|\dell^{-2n}|u_b\right\rangle
-\left\langle u_a|\dell^{-2n}|u_a\right\rangle.
\end{gather*}
\end{Identity}
Using translation invariance on the circle, we can express the
identity in terms of the difference $x=u_a-u_b$. The inverse
derivatives can then be expressed by periodically continued Bernoulli
polynomials
\begin{gather*}
\left\langle u_a|\dell^{-n}|u_b\right\rangle =
\delta_{n,0}\sum_{k=-\infty}^\infty\delta(x-k) -\frac{1}{n!}\bar{B}_n(x),\qquad
n\ge 0,
\end{gather*}
where $\bar{B}_n(x)=B_n(x-[x])$ ($[x]$ denotes the largest integer
smaller or equal to $x$) coincides with the Bernoulli polynomial
$B_n(x)$ on the interval $x\in[0,1)$ and is periodically continued to
 $x\in\R$ by $\bar{B}_n(x+1)=\bar{B}_n(x)$ (hence, $\bar{B}_n(x)$ is
 no polynomial for $n>0$, when it is regarded as defined on
 $\R$). $\bar{B}_n(x)$ is continuous at $x\in\Z$ for all $n\neq 1$,
 whereas $\bar{B}_1(x)$ jumps from $\frac{1}{2}$ on the left to~$-\frac{1}{2}$ on the right of integer arguments, and the value at
 integers is chosen such that it equals the Bernoulli number $B_1$,
 which is $-\frac{1}{2}$ by convention.

 Using these definitions, and
 restricting $x=u_a-u_b$ to the interval $x\in[0,1)$, the identity
 can be written in terms of Bernoulli polynomials, which is the
 form that we are going to prove:
\begin{Identity}\label{identity-2}
\begin{gather*}
 \e^{px(1-x)}=1+2\sum_{n=1}^\infty p^{n-\frac{1}{2}}
 H_{2n-1}\bigg(\frac{\sqrt{p}}{2}\bigg)(B_{2n}(x)-B_{2n}(0)).
 \end{gather*}
\end{Identity}
The proof given below requires the following lemma:
\begin{Lemma}\label{lemma}
 Let $r$ be any odd positive integer, $x$ an indeterminate. Then the
 following polynomial identity holds
 \begin{equation*}
 \sum_{n=0}^r\binom{r}{n} (2x-r+n)_{n} (x)_{r-n} B_n=0,
 \end{equation*}
 where $B_n=B_n(0)$ denotes the $n$th Bernoulli number and
 $(a)_{0}=1$, $(a)_{n}=(a)_{n-1} (a-n+1)$ the Pochhammer symbol
 denoting here the falling factorial $($which is a polynomial if $a$ is one$)$.
\end{Lemma}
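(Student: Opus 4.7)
My plan is to recast the sum
\begin{gather*}
P_r(x):=\sum_{n=0}^r\binom{r}{n}(2x-r+n)_n(x)_{r-n}B_n
\end{gather*}
as a residue at $t=0$, and to exhibit a parity symmetry---inherited from Kummer's transformation of the confluent hypergeometric function---that forces this residue to vanish whenever $r$ is odd. First I would reindex by $m=r-n$ and combine the binomial coefficients with $B_{r-m}/(r-m)!=[t^{r-m}]\,t/(e^t-1)$, so as to read $P_r(x)/r!$ as the $t^r$-coefficient of a Cauchy product. One obtains
\begin{gather*}
P_r(x)=r!\,\mathop{\mathrm{Res}}\limits_{t=0}\frac{F(t)}{t^r(e^t-1)},\qquad F(t):=\sum_{m=0}^r\binom{x}{m}(2x-m)_{r-m}t^m.
\end{gather*}

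The crucial observation is that, thanks to $(2x)_r=(2x)_m(2x-m)_{r-m}$, the polynomial $F(t)$ coincides through order $r$ with the entire function $\widetilde F(t):=(2x)_r\,M(-x;-2x;t)$, where $M(a;b;z)=\sum_{m\geq 0}\prod_{k=0}^{m-1}\frac{a+k}{b+k}\cdot\frac{z^m}{m!}$ is Kummer's confluent hypergeometric function. Since only Taylor coefficients of order $\leq r$ enter the residue at $t=0$, one may freely replace $F$ by $\widetilde F$ in the integrand. Kummer's classical transformation $M(a;b;z)=e^z M(b-a;b;-z)$, specialised to $a=-x$, $b=-2x$, yields $b-a=-x=a$, and hence $M(-x;-2x;t)=e^t M(-x;-2x;-t)$, i.e.\ $\widetilde F(t)=e^t\widetilde F(-t)$.

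Setting $H(t):=e^{-t/2}\widetilde F(t)$, this functional equation becomes $H(-t)=H(t)$, so $H$ is an even function of $t$. Using $e^t-1=2e^{t/2}\sinh(t/2)$, the integrand rewrites as $\widetilde F(t)/[t^r(e^t-1)]=H(t)/[2t^r\sinh(t/2)]$; for odd $r$, both $t^r$ and $\sinh(t/2)$ are odd in $t$, so the denominator is even and the full Laurent expansion around $t=0$ contains only even powers of $t$, making the coefficient of $t^{-1}$ (and hence the residue) vanish. This gives $P_r(x)=0$. The step I expect to be most delicate is the identification of $F$ with $\widetilde F$ at the residue level: one must verify $[t^m]F=[t^m]\widetilde F$ for all $m\leq r$, treat Kummer's identity as a formal power series identity so as to sidestep the restriction $-2x\notin\mathbb{Z}_{\leq 0}$, and finally extend $P_r(x)\equiv 0$ to all complex $x$ by polynomial continuation.
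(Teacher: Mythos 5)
Your proposal is correct. Each step checks out: the reindexing $m=r-n$ together with $(x)_m/m!=\binom{x}{m}$ and $B_k/k!=[t^k]\,t/\big(\e^t-1\big)$ gives exactly the stated residue; the splitting $(2x)_r=(2x)_m(2x-m)_{r-m}$ shows $[t^m]F=[t^m]\widetilde F$ for $m\le r$, which is all the residue sees since the pole has order $r+1$; Kummer's transformation in the self-dual case $b=2a$ gives the evenness of $H$, and for odd $r$ the denominator $2t^r\sinh(t/2)$ is even, so the $t^{-1}$ coefficient vanishes. The caveats you flag (formal validity of Kummer over $\C(x)$, nonvanishing of $(2x)_m$ as a polynomial, polynomial continuation) are exactly the right ones and are all harmless. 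I verified the identity independently for $r=1,3$.

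Your route is, however, genuinely different from the paper's. The paper works directly with the Bernoulli \emph{polynomial}: for odd $r$ the reflection $B_r(1-x)=-B_r(x)$ forces $\int_0^1 f(x)B_r(x)\,\dd x=0$ for any $f$ symmetric about $1/2$; choosing $f(x)=x^{u-1}(1-x)^{u-1}$ turns each moment into a quotient of $\Gamma$-functions via the Beta integral, which is then rewritten in falling factorials and continued from $-u$ to the indeterminate $x$. You instead work with the exponential generating function and kill the residue by a parity argument resting on Kummer's transformation $M(a;2a;z)=\e^{z}M(a;2a;-z)$. These are really two avatars of the same symmetry: the paper's weight $x^{u-1}(1-x)^{u-1}$ is invariant under $x\mapsto 1-x$, and integrating $\e^{xt}$ against it is precisely how one proves the $b=2a$ case of Kummer's transformation, so your evenness of $\e^{-t/2}\widetilde F(t)$ is the generating-function shadow of the paper's $\int_0^1 fB_r=0$. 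What each buys: the paper's argument is elementary and self-contained (only the Beta integral, the reflection formula, and a continuation in $u$), while yours dispenses with the meromorphic continuation entirely, working with formal power series over $\C(x)$, at the price of importing Kummer's transformation as an external input and of the truncation bookkeeping between $F$ and $\widetilde F$ that you correctly identify as the delicate point.
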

\begin{proof}
 For any odd $r$, the Bernoulli polynomial $B_r(x)$ has the property
 $B_r(1-x)=-B_r(x)$. Hence, given any integrable function
 $f\colon [0,1]\to\C$ with $f(1-x)=f(x)$, we have by substituting $x\mapsto
 1-x$ and using the transformation properties
\begin{gather*}
 \int_0^1f(x)B_r(x)\,\dd x = \int_0^1f(1-x)B_r(1-x)\,\dd
 x=-\int_0^1f(x)B_r(x)\,\dd x=0.
 \end{gather*}
 Applying this to the explicit form of the Bernoulli polynomials,
\begin{gather}\label{Bernoulli-polynomial-explicit}
 B_r(x)=\sum_{n=0}^r\binom{r}{n}B_nx^{r-n},
 \end{gather}
 we get
\begin{gather*}
 \sum_{n=0}^r\binom{r}{n}B_n\int_0^1f(x) x^{r-n}\dd
 x=0,\qquad f(x)=f(1-x),\quad r>0\ \text{odd}.
 \end{gather*}
 Choosing $f(x)=x^{u-1}(1-x)^{u-1}$ for any $u\in\C$, $\operatorname{Re} u>0$, we can
 express the integral explicitly by $\Gamma$-functions
 \begin{gather*}
 \int_0^1x^{u-1}(1-x)^{u-1}x^{r-n}\dd
 x=\frac{\Gamma(u+r-n)\Gamma(u)}{\Gamma(2u+r-n)}
 \end{gather*}
 and obtain an identity of meromorphic functions in $u$
\begin{gather*}
 \sum_{n=0}^r\binom{r}{n}B_n\frac{\Gamma(u+r-n)\Gamma(u)}{\Gamma(2u+r-n)} =0,\qquad
\mbox{$r>0$ odd}.
 \end{gather*}
 Using the $\Gamma$-function identity $\Gamma(x+1)=x \Gamma(x)$ and
 assuming $r,n\in\Z$, $0\le n\le r$, we can introduce
 Pochhammer symbols\footnote{Other Pochhammer representations using
 also $\Gamma(x)\Gamma(1-x)=\frac{\pi}{\sin(\pi x)}$ do not give
 other identities, because they can also be obtained by the
 identity $(a)_{n}=(-1)^n(-a+n-1)_{n}$, where each factor is multiplied
 with $-1$.} for falling factorials
\begin{gather*}
 \Gamma(u+r-n)=(u+r-n-1)_{r-n}\Gamma(u) =(-1)^{r-n}(-u)_{r-n}\Gamma(u),
 \\
 \Gamma(2u+r-n)=\frac{\Gamma(2u+r)}{(2u+r-1)_{n}} =(-1)^n\frac{\Gamma(2u+r)}{(-2u-r+n)_{n}}
 \end{gather*}
 and obtain after dividing out the factors independent of $n$
\begin{gather*}
 \sum_{n=0}^r\binom{r}{ n} B_n (-u)_{r-n} (-2u-r+n)_{n} =0,\qquad \mbox{$r>0$ odd}.
 \end{gather*}
 Now we have an equality between analytic functions in~$u$, which are
 in fact polynomials, such that we can replace the complex variable
 $-u$ with the indeterminate $x$ and obtain the polynomial identity
 of the lemma.
\end{proof}
\begin{Corollary}\label{corollary}
\begin{gather*}
\sum_{n=0}^{2r+1}\binom{2r+1}{ n} B_n \binom{n+2m-2r-1}{ m}=0,\qquad
r,m\in\Z,\quad 0\le r<m.
\end{gather*}
\end{Corollary}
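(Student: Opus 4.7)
The corollary is an immediate specialization of the lemma. The idea is to substitute $r\mapsto 2r+1$ (which is odd and positive for $r\ge 0$) and $x\mapsto m$ (a positive integer since $m>r\ge 0$) into the polynomial identity of the lemma, and then show that the resulting product of falling factorials is, up to an $n$-independent factor, exactly the binomial coefficient $\binom{n+2m-2r-1}{m}$.

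\textbf{Key computation.} After the substitutions, the lemma reads
\begin{gather*}
0=\sum_{n=0}^{2r+1}\binom{2r+1}{n}(2m-2r-1+n)_{n}\,(m)_{2r+1-n}\,B_n.
\end{gather*}
Because $m>r\ge 0$ implies $2m-2r-1\ge 1$, the first falling factorial is a genuine ratio of factorials,
\begin{gather*}
(2m-2r-1+n)_{n}=\frac{(n+2m-2r-1)!}{(2m-2r-1)!}.
\end{gather*}
For the second factor, $(m)_{2r+1-n}$ is a product of $2r+1-n$ consecutive integers descending from $m$; it vanishes whenever $2r+1-n>m$ (equivalently $n<2r+1-m$), and otherwise equals $\frac{m!}{(n+m-2r-1)!}$. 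Multiplying the two gives
\begin{gather*}
(2m-2r-1+n)_{n}\,(m)_{2r+1-n}=\frac{(m!)^{2}}{(2m-2r-1)!}\,\binom{n+2m-2r-1}{m},
\end{gather*}
and this identity holds also on the vanishing range $0\le n<2r+1-m$, since there $\binom{n+2m-2r-1}{m}=0$ too (the upper argument lies in $\{0,1,\dots,m-1\}$).

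\textbf{Conclusion.} Substituting back into the specialized lemma and dividing by the nonzero constant $\frac{(m!)^{2}}{(2m-2r-1)!}$ yields the stated identity
\begin{gather*}
\sum_{n=0}^{2r+1}\binom{2r+1}{n}B_n\binom{n+2m-2r-1}{m}=0.
\end{gather*}

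\textbf{Main obstacle.} There is no serious obstacle; the only care required is in handling edge cases in the factorial manipulation. Specifically one has to verify that the strictly positive factor $(2m-2r-1)!$ in the denominator is legitimate (guaranteed by $m>r$), and that the two candidate expressions for the product of falling factorials both vanish in the range $n<2r+1-m$, so that the identification is valid term by term in the sum.
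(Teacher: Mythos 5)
Your proof is correct and follows essentially the same route as the paper's: specialize the Lemma at $x=m$ with odd integer $2r+1$ and identify the product of falling factorials with $\binom{n+2m-2r-1}{m}$ up to the $n$-independent constant $\frac{(m!)^2}{(2m-2r-1)!}$ (the paper instead splits $(n+2m-2r-1)_m$ into three falling factorials, which is the same computation read in the other direction). Your explicit check that both sides vanish term by term for $0\le n<2r+1-m$ is a welcome extra care that the paper's proof leaves implicit.
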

\begin{proof}
 Writing the second binomial coefficient as
\begin{align*}
\binom{n+2m-2r-1}{m}&=\frac{1}{m!}(n+2m-2r-1)_{m}
\\
&=\frac{1}{m!}(n+2m-2r-1)_{n}(2m-2r-1)_{m-2r-1}(m)_{2r+1-n},
\end{align*}
the sum of the corollary becomes
\begin{gather*}
 \frac{1}{m!}(2m-2r-1)_{m-2r-1}\sum_{n=0}^{2r+1}\binom{2r+1}{n} B_n (n+2m-2r-1)_{n}(m)_{2r+1-n},
 \end{gather*}
 which is zero by Lemma \ref{lemma} setting $x=m$.
\end{proof}
We come now to the main proof.
\begin{proof}[Proof of Identity~\ref{identity-2}]
 Using the explicit expression for the Hermite polynomials
\begin{gather*}
 H_n(x)=\sum_{k=0}^{\left[\frac{n}{2}\right]}\frac{n!}{(n-2k)! k!} (-1)^k (2x)^{n-2k},
 \end{gather*}
 and comparing equal powers $p^m$ ($m\ge 1$), the identity reduces to
 the equivalent form
\begin{gather*}
 \frac{1}{m!}x^m(1-x)^m=
 \sum_{\frac{m+1}{2}\le n\le m}\frac{2}{(2n-m-1)! (2m-2n+1)!}
 \frac{1}{2n}(B_{2n}(x)-B_{2n}(0)).
 \end{gather*}
 Multiplying by $m!\neq0$, inserting the explicit form
 (\ref{Bernoulli-polynomial-explicit}) for the Bernoulli polynomials
 in terms of Bernoulli numbers, expanding the binomial $(1-x)^m$ and
 comparing equal powers $x^k$ ($k\ge1$), the identity takes the
 equivalent form
\begin{gather*}
 \sum_{\frac{k}{2}\le n\le m}\frac{1}{2n}\binom{2n}{k}
 2 B_{2n-k} \binom{m}{ 2n-m-1}=(-1)^k \binom{m}{ k-m},\qquad 1\le k\le 2m,
 \end{gather*}
 or equivalently, replacing $2n$ by a new summation variable $n$
 restricted to even values
\begin{gather}\label{condition}
 \sum_{\substack{n=k\\\mbox{\scriptsize $n$ even}}}^{2m}\frac{1}{n}\binom{n}{
 k} 2 B_{n-k} \binom{m}{ n-m-1}=(-1)^k \binom{m}{ k-m},\qquad 1\le k\le 2m.
 \end{gather}
 Now we have two cases:

1. $k$ is odd. Then, because all odd Bernoulli numbers, except
 $B_1=-\frac{1}{2}$, vanish, the sum in~(\ref{condition})
 consists of the single term for $n=k+1$, which gives
\begin{gather*}
 \frac{1}{k+1}\binom{k+1}{ k} 2 B_1 \binom{m}{k-m}=-\binom{m}{ k-m},
 \end{gather*}
 which is just the right-hand side for odd $k$. Hence,
 (\ref{condition}) is true in this case.

2. $k$ is even. Then the right-hand side of (\ref{condition}),
 having a positive sign, fits into the sum as additional term for
 $n=k+1$. Thus, in this case, taking into account that all other odd
 Bernoulli numbers vanish, equation (\ref{condition}) is equivalent  to
\begin{gather*}
 \sum_{n=k}^{2m+1}\frac{1}{n}\binom{n}{k} 2 B_{n-k} \binom{m}{n-m-1}=0,
 \end{gather*}
 where the summation extends now over all, even and odd, integers,
 and the range had to be extended to $n=2m+1$ to cover also the case
 $k=2m$, where the additional term for $n=k+1$ is exactly the term
 for $n=2m+1$.

 Dividing by $2$, shifting $n$ by $k$, writing the binomial
 coefficients in factorials and introducing $r\ge0$ by setting $k=2m-2r$,
 we obtain
\begin{gather*}
 \sum_{n=0}^{2r+1}\frac{(n+2m-2r-1)!}{n! (2m-2r)!}
 B_{n} \frac{m!}{(2r+1-n)! (n+m-2r-1)!}=0.
 \end{gather*}
 Multiplying this equation with $\frac{(2r+1)!(2m-2r)!}{(m!)^2}\neq
 0$ and recombining the factorials into binomial coefficients, it can
 be written equivalently as
\begin{gather*}
 \sum_{n=0}^{2r+1}\binom{2r+1}{ n} B_{n} \binom{n+2m-2r-1}{ m}=0,
 \end{gather*}
 which is true by Corollary~\ref{corollary}. Hence, (\ref{condition})
 is true also in this case.
\end{proof}

\subsection*{Acknowledgements}
We thank Andrei Davydychev and Tord Riemann for sharing with us their expertise on scalar off-shell $N$-point functions.

\pdfbookmark[1]{References}{ref}
\LastPageEnding


\begin{thebibliography}{99}
\footnotesize\itemsep=0pt

\bibitem{afalma}
Affleck I.K., Alvarez O., Manton N.S., Pair production at strong coupling in
 weak external fields, \href{https://doi.org/10.1016/0550-3213(82)90455-2}{\textit{Nuclear Phys.~B}} \textbf{197} (1982), 509--519.

\bibitem{125}
Ahmadiniaz N., Balli F.M., Corradini O., D\'avila J.M., Schubert C.,
 Compton-like scattering of a scalar particle with {$N$} photons and one
 graviton, \href{https://doi.org/10.1016/j.nuclphysb.2019.114877}{\textit{Nuclear Phys.~B}} \textbf{950} (2020), 114877, 21~pages,
 \href{https://arxiv.org/abs/1908.03425}{arXiv:1908.03425}.

\bibitem{Ahmadiniaz:2015xoa}
Ahmadiniaz N., Bastianelli F., Corradini O., Dressed scalar propagator in a
 non-{A}belian background from the worldline formalism, \href{https://doi.org/10.1103/PhysRevD.93.025035}{\textit{Phys. Rev.~D}}
 \textbf{93} (2016), 0250335, 13~pages, {A}ddendum,
 \href{https://doi.org/10.1103/PhysRevD.93.049904}{\textit{Phys. Rev.~D}}
 \textbf{93} (2016), 049904, \href{https://arxiv.org/abs/1508.05144}{arXiv:1508.05144}.

\bibitem{NCU1}
Ahmadiniaz N., Corradini O., D'Ascanio D., Estrada-Jim\'enez S., Pisani P.,
 Noncommutative {$\rm U(1)$} gauge theory from a worldline perspective,
 \href{https://doi.org/10.1007/JHEP11(2015)069}{\textit{J.~High Energy Phys.}} \textbf{2015} (2015), no.~11, 069, 28~pages,
 \href{https://arxiv.org/abs/1507.07033}{arXiv:1507.07033}.

\bibitem{NCUN}
Ahmadiniaz N., Corradini O., Edwards J.P., Pisani P., {${\rm U}(N)$}
 {Y}ang--{M}ills in non-commutative space time, \href{https://doi.org/10.1007/jhep04(2019)067}{\textit{J.~High Energy Phys.}}
 \textbf{2019} (2019), no.~4, 067, 34~pages, \href{https://arxiv.org/abs/1811.07362}{arXiv:1811.07362}.

\bibitem{136}
Ahmadiniaz N., Lopez-Arcos C., Lopez-Lopez M.A., Schubert C., The {QED}
 four-photon amplitudes off-shell: part~1, \href{https://arxiv.org/abs/2012.11791}{arXiv:2012.11791}.

\bibitem{4photon}
Ahmadiniaz N., Lopez-Arcos C., Lopez-Lopez M.A., Schubert C., The QED
 four-photon amplitudes off-shell: parts 2--4, {i}n preparation.

\bibitem{105}
Ahmadiniaz N., Schubert C., {QCD} gluon vertices from the string-inspired
 formalism, \href{https://doi.org/10.1142/S0218301316420040}{\textit{Internat.~J. Modern Phys.~E}} \textbf{26} (2016), 1642004,
 20~pages, \href{https://arxiv.org/abs/1811.10780}{arXiv:1811.10780}.

\bibitem{alvarezgaume}
Alvarez-Gaum\'e L., Supersymmetry and the {A}tiyah--{S}inger index theorem,
 \href{https://doi.org/10.1007/BF01205500}{\textit{Comm. Math. Phys.}} \textbf{90} (1983), 161--173.

\bibitem{alvwit}
Alvarez-Gaum\'e L., Witten E., Gravitational anomalies, \href{https://doi.org/10.1016/0550-3213(84)90066-X}{\textit{Nuclear
 Phys.~B}} \textbf{234} (1984), 269--330.

\bibitem{bckr}
Baikov P.A., Chetyrkin K.G., K\"uhn J.H., Rittinger J., Vector correlator in
 massless {QCD} at order {$\mathcal{O}\big(\alpha_s^4\big)$} and the {QED}
 $\beta$-function at five loop, \href{https://doi.org/10.1007/JHEP07(2012)017}{\textit{J.~High Energy Phys.}} \textbf{2012}
 (2012), no.~12, 017, 14~pages, \href{https://arxiv.org/abs/1206.1284}{arXiv:1206.1284}.

\bibitem{Balachandran:1976ya}
Balachandran A.P., Salomonson P., Skagerstam B.-S., Winnberg J.-O., Classical
 description of a particle interacting with a non-Abelian gauge field,
 \href{https://doi.org/10.1103/PhysRevD.15.2308}{\textit{Phys. Rev.~D}} \textbf{15} (1977), 2308--2317.

\bibitem{Barducci:1976xq}
Barducci A., Casalbuoni R., Lusanna L., Classical scalar and spinning particles
 interacting with external {Y}ang--{M}ills fields, \href{https://doi.org/10.1016/0550-3213(77)90278-4}{\textit{Nuclear Phys.~B}}
 \textbf{124} (1977), 93--108.

\bibitem{bastianelli}
Bastianelli F., The path integral for a particle in curved spaces and {W}eyl
 anomalies, \href{https://doi.org/10.1016/0550-3213(92)90070-R}{\textit{Nuclear Phys.~B}} \textbf{376} (1992), 113--126,
 \href{https://arxiv.org/abs/hep-th/9112035}{arXiv:hep-th/9112035}.

\bibitem{Bastianelli:2013tsa}
Bastianelli F., Bonezzi R., One-loop quantum gravity from a worldline
 viewpoint, \href{https://doi.org/10.1007/JHEP07(2013)016}{\textit{J.~High Energy Phys.}} \textbf{2013} (2013), no.~7, 016,
 23~pages, \href{https://arxiv.org/abs/1304.7135}{arXiv:1304.7135}.

\bibitem{Bastianelli:2012bn}
Bastianelli F., Bonezzi R., Corradini O., Latini E., Effective action for
 higher spin fields on ({A})d{S} backgrounds, \href{https://doi.org/10.1007/JHEP12(2012)113}{\textit{J.~High Energy Phys.}}
 \textbf{2012} (2012), no.~12, 113, 26~pages, \href{https://arxiv.org/abs/1210.4649}{arXiv:1210.4649}.

\bibitem{Bastianelli:2013pta}
Bastianelli F., Bonezzi R., Corradini O., Latini E., Particles with non abelian
 charges, \href{https://doi.org/10.1007/JHEP10(2013)098}{\textit{J.~High Energy Phys.}} \textbf{2013} (2013), no.~10, 098,
 13~pages, \href{https://arxiv.org/abs/1309.1608}{arXiv:1309.1608}.

\bibitem{Bastianelli:2019xhi}
Bastianelli F., Bonezzi R., Corradini O., Latini E., One-loop quantum gravity
 from the {$\mathcal N = 4$} spinning particle, \href{https://doi.org/10.1007/jhep11(2019)124}{\textit{J.~High Energy Phys.}}
 \textbf{2019} (2019), no.~11, 124, 15~pages, \href{https://arxiv.org/abs/1909.05750}{arXiv:1909.05750}.

\bibitem{Bastianelli:2012bz}
Bastianelli F., Corradini O., D\'avila J.M., Schubert C., On the low-energy
 limit of one-loop photon-graviton amplitudes, \href{https://doi.org/10.1016/j.physletb.2012.08.030}{\textit{Phys. Lett.~B}}
 \textbf{716} (2012), 345--349, \href{https://arxiv.org/abs/1202.4502}{arXiv:1202.4502}.

\bibitem{Bastianelli:2007pv}
Bastianelli F., Corradini O., Latini E., Higher spin fields from a worldline
 perspective, \href{https://doi.org/10.1088/1126-6708/2007/02/072}{\textit{J.~High Energy Phys.}} \textbf{2007} (2007), no.~2, 072,
 19~pages, \href{https://arxiv.org/abs/hep-th/0701055}{arXiv:hep-th/0701055}.

\bibitem{Bastianelli:2006hq}
Bastianelli F., Corradini O., Pisani P.A.G., Worldline approach to quantum
 field theories on flat manifolds with boundaries, \href{https://doi.org/10.1088/1126-6708/2007/02/059}{\textit{J.~High Energy
 Phys.}} \textbf{2007} (2007), no.~2, 059, 18~pages, \href{https://arxiv.org/abs/hep-th/0612236}{arXiv:hep-th/0612236}.

\bibitem{Bastianelli:2008vh}
Bastianelli F., Corradini O., Pisani P.A.G., Schubert C., Scalar heat kernel
 with boundary in the worldline formalism, \href{https://doi.org/10.1088/1126-6708/2008/10/095}{\textit{J.~High Energy Phys.}}
 \textbf{2008} (2008), no.~10, 095, 21~pages, \href{https://arxiv.org/abs/0809.0652}{arXiv:0809.0652}.

\bibitem{bacova1}
Bastianelli F., Corradini O., van Nieuwenhuizen P., Dimensional regularization
 of the path integral in curved space on an infinite time interval,
 \href{https://doi.org/10.1016/S0370-2693(00)00978-3}{\textit{Phys. Lett.~B}} \textbf{490} (2000), 154--162,
 \href{https://arxiv.org/abs/hep-th/0007105}{arXiv:hep-th/0007105}.

\bibitem{bacova2}
Bastianelli F., Corradini O., van Nieuwenhuizen P., Dimensional regularization
 of nonlinear sigma models on a finite time interval, \href{https://doi.org/10.1016/S0370-2693(00)01180-1}{\textit{Phys. Lett.~B}}
 \textbf{494} (2000), 161--167, \href{https://arxiv.org/abs/hep-th/0008045}{arXiv:hep-th/0008045}.

\bibitem{bacozi1}
Bastianelli F., Corradini O., Zirotti A., B{RST} treatment of zero modes for
 the worldline formalism in curved space, \href{https://doi.org/10.1088/1126-6708/2004/01/023}{\textit{J.~High Energy Phys.}}
 \textbf{2004} (2004), no.~1, 023, 34~pages, \href{https://arxiv.org/abs/hep-th/0312064}{arXiv:hep-th/0312064}.

\bibitem{Bastianelli:2008cu}
Bastianelli F., D\'avila J.M., Schubert C., Gravitational corrections to the
 {E}uler--{H}eisenberg {L}agrangian, \href{https://doi.org/10.1088/1126-6708/2009/03/086}{\textit{J.~High Energy Phys.}}
 \textbf{2009} (2009), no.~3, 086, 26~pages, \href{https://arxiv.org/abs/0812.4849}{arXiv:0812.4849}.

\bibitem{Bastianelli:2007jv}
Bastianelli F., Nucamendi U., Schubert C., Villanueva V.M., One loop
 photon-graviton mixing in an electromagnetic field: part~2, \href{https://doi.org/10.1088/1126-6708/2007/11/099}{\textit{J.~High
 Energy Phys.}} \textbf{2007} (2007), no.~11, 099, 23~pages,
 \href{https://arxiv.org/abs/0710.5572}{arXiv:0710.5572}.

\bibitem{Bastianelli:2004zp}
Bastianelli F., Schubert C., One loop photon-graviton mixing in an
 electromagnetic field: part~1, \href{https://doi.org/10.1088/1126-6708/2005/02/069}{\textit{J.~High Energy Phys.}} \textbf{2005}
 (2005), no.~2, 069, 21~pages, \href{https://arxiv.org/abs/gr-qc/0412095}{arXiv:gr-qc/0412095}.

\bibitem{basvan-book}
Bastianelli F., van Nieuwenhuizen P., Path integrals and anomalies in curved
 space, \textit{Cambridge Monographs on Mathematical Physics}, \href{https://doi.org/10.1017/CBO9780511535031}{Cambridge University
 Press}, Cambridge, 2006.

\bibitem{Bastianelli:2002fv}
Bastianelli F., Zirotti A., Worldline formalism in a gravitational background,
 \href{https://doi.org/10.1016/S0550-3213(02)00683-1}{\textit{Nuclear Phys.~B}} \textbf{642} (2002), 372--388,
 \href{https://arxiv.org/abs/hep-th/0205182}{arXiv:hep-th/0205182}.

\bibitem{berkos1}
Bern Z., Kosower D.A., Efficient calculation of one-loop {QCD} amplitudes,
 \href{https://doi.org/10.1103/PhysRevLett.66.1669}{\textit{Phys. Rev. Lett.}} \textbf{66} (1991), 1669--1672.

\bibitem{berkos2}
Bern Z., Kosower D.A., The computation of loop amplitudes in gauge theories,
 \href{https://doi.org/10.1016/0550-3213(92)90134-W}{\textit{Nuclear Phys.~B}} \textbf{379} (1992), 451--561.

\bibitem{Bonezzi:2012vr}
Bonezzi R., Corradini O., Franchino Vi\~{n}as S.A., Pisani P.A.G., Worldline
 approach to noncommutative field theory, \href{https://doi.org/10.1088/1751-8113/45/40/405401}{\textit{J.~Phys.~A: Math. Theor.}}
 \textbf{45} (2012), 405401, 15~pages, \href{https://arxiv.org/abs/1204.1013}{arXiv:1204.1013}.

\bibitem{brdekr}
Broadhurst D.J., Delbourgo R., Kreimer D., Unknotting the polarized vacuum of
 quenched {QED}, \href{https://doi.org/10.1016/0370-2693(95)01343-1}{\textit{Phys. Lett.~B}} \textbf{421} (19896), 421--428,
 \href{https://arxiv.org/abs/hep-ph/9509296}{arXiv:hep-ph/9509296}.

\bibitem{cecgir}
Cecotti S., Girardello L., Functional measure, topology and dynamical
 supersymmetry breaking, \href{https://doi.org/10.1016/0370-2693(82)90947-9}{\textit{Phys. Lett.~B}} \textbf{110} (1982), 39--43.

\bibitem{ConfinedScalar}
Corradini O., Edwards J.P., Huet I., Manzo L., Pisani P., Worldline formalism
 for a confined scalar field, \href{https://doi.org/10.1007/jhep08(2019)037}{\textit{J.~High Energy Phys.}} \textbf{2019}
 (2019), no.~8, 037, 20~pages, \href{https://arxiv.org/abs/1905.00945}{arXiv:1905.00945}.

\bibitem{cormur}
Corradini O., Muratori M., A {M}onte {C}arlo approach to the worldline
 formalism in curved space, \href{https://doi.org/10.1007/jhep11(2020)169}{\textit{J.~High Energy Phys.}} \textbf{2020}
 (2020), no.~11, 169, 22~pages, \href{https://arxiv.org/abs/2006.02911}{arXiv:2006.02911}.

\bibitem{cotopi}
Costantini V., De~Tollis B., Pistoni G., Nonlinear effects in quantum
 electrodynamics, \href{https://doi.org/10.1007/BF02736745}{\textit{Nuovo Cimento~A}} \textbf{2} (1971), 733--787.

\bibitem{davydychev}
Davydychev A.I., Four-point function in general kinematics through geometrical
 splitting and reduction, \href{https://doi.org/10.1088/1742-6596/1085/5/052016}{\textit{J.~Phys. Conf. Ser.}} \textbf{1085} (2018),
 052016, 8~pages, \href{https://arxiv.org/abs/1711.07351}{arXiv:1711.07351}.

\bibitem{bpsv:npb446}
de~Boer J., Peeters B., Skenderis K., van Nieuwenhuizen P., Loop calculations
 in quantum-mechanical non-linear sigma models, \href{https://doi.org/10.1016/0550-3213(95)00241-J}{\textit{Nuclear Phys.~B}}
 \textbf{446} (1995), 211--222, \href{https://arxiv.org/abs/hep-th/9504097}{arXiv:hep-th/9504097}.

\bibitem{dhogag1}
D'Hoker E., Gagn\'e D.G., Worldline path integrals for fermions with scalar,
 pseudoscalar and vector couplings, \href{https://doi.org/10.1016/0550-3213(96)00125-3}{\textit{Nuclear Phys.~B}} \textbf{467}
 (1996), 272--296, \href{https://arxiv.org/abs/hep-th/9508131}{arXiv:hep-th/9508131}.

\bibitem{dhogag2}
D'Hoker E., Gagn\'e D.G., Worldline path integrals for fermions with general
 couplings, \href{https://doi.org/10.1016/0550-3213(96)00126-5}{\textit{Nuclear Phys.~B}} \textbf{467} (1996), 297--312,
 \href{https://arxiv.org/abs/hep-th/9512080}{arXiv:hep-th/9512080}.

\bibitem{51}
Dunne G.V., Schubert C., Two-loop self-dual {E}uler--{H}eisenberg
 {L}agrangians. {I}.~{R}eal part and helicity amplitudes, \href{https://doi.org/10.1088/1126-6708/2002/08/053}{\textit{J.~High
 Energy Phys.}} \textbf{2002} (2002), no.~8, 053, 25~pages,
 \href{https://arxiv.org/abs/hep-th/0205004}{arXiv:hep-th/0205004}.

\bibitem{52}
Dunne G.V., Schubert C., Two-loop self-dual {E}uler--{H}eisenberg
 {L}agrangians. {II}.~{I}maginary part and {B}orel analysis, \href{https://doi.org/10.1088/1126-6708/2002/06/042}{\textit{J.~High
 Energy Phys.}} \textbf{2002} (2002), no.~6, 042, 29~pages,
 \href{https://arxiv.org/abs/hep-th/0205005}{arXiv:hep-th/0205005}.

\bibitem{63}
Dunne G.V., Schubert C., Worldline instantons and pair production in
 inhomogenous fields, \href{https://doi.org/10.1103/PhysRevD.72.105004}{\textit{Phys. Rev.~D}} \textbf{72} (2005), 105004,
 12~pages, \href{https://arxiv.org/abs/hep-th/0507174}{arXiv:hep-th/0507174}.

\bibitem{59}
Dunne G.V., Schubert C., Bernoulli number identities from quantum field theory
 and topological string theory, \href{https://doi.org/10.4310/CNTP.2013.v7.n2.a1}{\textit{Commun. Number Theory Phys.}}
 \textbf{7} (2013), 225--249, \href{https://arxiv.org/abs/math.NT/0406610}{arXiv:math.NT/0406610}.

\bibitem{124}
Edwards J.P., Gerber U., Schubert C., Trejo M.A., Tsiftsi T., Weber A.,
 Applications of the worldline {M}onte {C}arlo formalism in quantum mechanics,
 \href{https://doi.org/10.1016/j.aop.2019.167966}{\textit{Ann. Physics}} \textbf{411} (2019), 167966, 44~pages,
 \href{https://arxiv.org/abs/2019.16796}{arXiv:2019.16796}.

\bibitem{118}
Edwards J.P., Huet A., Schubert C., On the low-energy limit of the {QED}
 {$N$}-photon amplitudes: part~2, \href{https://doi.org/10.1016/j.nuclphysb.2018.07.026}{\textit{Nuclear Phys.~B}} \textbf{935}
 (2018), 198--209, \href{https://arxiv.org/abs/1807.10697}{arXiv:1807.10697}.

\bibitem{126}
Edwards J.P., Schubert C., Quantum mechanical path integrals in the first
 quantised approach to quantum field theory, \href{https://arxiv.org/abs/1912.10004}{arXiv:1912.10004}.

\bibitem{fapaza}
Faber C., Pandharipande R., Hodge integrals and {G}romov--{W}itten theory,
 \href{https://doi.org/10.1007/s002229900028}{\textit{Invent. Math.}} \textbf{139} (2000), 173--199,
 \href{https://arxiv.org/abs/math.AG/9810173}{arXiv:math.AG/9810173}.

\bibitem{feynman50}
Feynman R.P., Mathematical formulation of the quantum theory of electromagnetic
 interaction, \href{https://doi.org/10.1103/PhysRev.80.440}{\textit{Phys. Rev.}} \textbf{80} (1950), 440--457.

\bibitem{feynman51}
Feynman R.P., An operator calculus having applications in quantum
 electrodynamics, \href{https://doi.org/10.1103/PhysRev.84.108}{\textit{Phys. Rev.}} \textbf{84} (1951), 108--128.

\bibitem{fljeta}
Fleischer J., Jegerlehner F., Tarasov O.V., A new hypergeometric representation
 of one-loop scalar integrals in {$d$} dimensions, \href{https://doi.org/10.1016/j.nuclphysb.2003.09.004}{\textit{Nuclear Phys.~B}}
 \textbf{672} (2003), 303--328, \href{https://arxiv.org/abs/hep-ph/0307113}{arXiv:hep-ph/0307113}.

\bibitem{25}
Fliegner D., Haberl P., Schmidt M.G., Schubert C., The higher derivative
 expansion of the effective action by the string inspired method,~{II},
 \href{https://doi.org/10.1006/aphy.1997.5778}{\textit{Ann. Physics}} \textbf{264} (1998), 51--74, \href{https://arxiv.org/abs/hep-th/9707189}{arXiv:hep-th/9707189}.

\bibitem{6}
Fliegner D., Schmidt M.G., Schubert C., The higher derivative expansion of the
 effective action by the string-inspired method.~{I}, \href{https://doi.org/10.1007/BF01557242}{\textit{Z.~Phys.~C}}
 \textbf{64} (1994), 111--116, \href{https://arxiv.org/abs/hep-ph/9401221}{arXiv:hep-ph/9401221}.

\bibitem{fradkin66}
Fradkin E.S., Application of functional methods in quantum field theory and
 quantum statistics.~{II}, \href{https://doi.org/10.1016/0029-5582(66)90200-8}{\textit{Nuclear Phys.}} \textbf{76} (1966),
 588--624.

\bibitem{friwin}
Friedan D., Windey P., Supersymmetric derivation of the {A}tiyah--{S}inger
 index and the chiral anomaly, \href{https://doi.org/10.1016/0550-3213(84)90506-6}{\textit{Nuclear Phys.~B}} \textbf{235} (1984),
 395--416.

\bibitem{giekar}
Gies H., Karbstein F., An addendum to the {H}eisenberg--{E}uler effective
 action beyond one loop, \href{https://doi.org/10.1007/JHEP03(2017)108}{\textit{J.~High Energy Phys.}} \textbf{2017} (2017),
 no.~3, 108, 35~pages, \href{https://arxiv.org/abs/1612.07251}{arXiv:1612.07251}.

\bibitem{giekli}
Gies H., Klingm\"uller K., Pair production in inhomogeneous fields,
 \href{https://doi.org/10.1103/PhysRevD.72.065001}{\textit{Phys. Rev.~D}} \textbf{72} (2005), 065001, 11~pages,
 \href{https://arxiv.org/abs/hep-ph/0505099}{arXiv:hep-ph/0505099}.

\bibitem{gielan}
Gies H., Langfeld K., Quantum diffusion of magnetic fields in a numerical
 worldline approach, \href{https://doi.org/10.1016/S0550-3213(01)00377-7}{\textit{Nuclear Phys.~B}} \textbf{613} (2001), 353--365,
 \href{https://arxiv.org/abs/hep-ph/0102185}{arXiv:hep-ph/0102185}.

\bibitem{gilamo}
Gies H., Langfeld K., Moyaerts L., Casimir effect on the worldline,
 \href{https://doi.org/10.1088/1126-6708/2003/06/018}{\textit{J.~High Energy Phys.}} \textbf{2003} (2003), no.~6, 018, 29~pages,
 \href{https://arxiv.org/abs/hep-th/0303264}{arXiv:hep-th/0303264}.

\bibitem{Hollowood:2007ku}
Hollowood T.J., Shore G.M., The refractive index of curved spacetime: the fate
 of causality in {QED}, \href{https://doi.org/10.1016/j.nuclphysb.2007.11.034}{\textit{Nuclear Phys.~B}} \textbf{795} (2008),
 138--171, \href{https://arxiv.org/abs/0707.2303}{arXiv:0707.2303}.

\bibitem{Kiem:2001dm}
Kiem Y., Kim Y., Ryou C., Sato H.-T., One-loop noncommutative {$\rm U(1)$} gauge
 theory from bosonic worldline approach, \href{https://doi.org/10.1016/S0550-3213(02)00182-7}{\textit{Nuclear Phys.~B}} \textbf{630}
 (2002), 55--86, \href{https://arxiv.org/abs/hep-th/0112176}{arXiv:hep-th/0112176}.

\bibitem{56}
Martin L.C., Schubert C., Villanueva~Sandoval V.M., On the low-energy limit of
 the {QED} {$N$}-photon amplitudes, \href{https://doi.org/10.1016/S0550-3213(03)00578-9}{\textit{Nuclear Phys.~B}} \textbf{668}
 (2003), 335--344, \href{https://arxiv.org/abs/hep-th/0301022}{arXiv:hep-th/0301022}.

\bibitem{29}
McKeon D.G.C., Schubert C., A new approach to axial vector model calculations,
 \href{https://doi.org/10.1016/S0370-2693(98)01074-0}{\textit{Phys. Lett.~B}} \textbf{440} (1998), 101--107,
 \href{https://arxiv.org/abs/hep-th/9807072}{arXiv:hep-th/9807072}.

\bibitem{12}
Mondrag\'on M., Nellen L., Schmidt M.G., Schubert C., Yukawa couplings for the
 spinning particle and the world-line formalism, \href{https://doi.org/10.1016/0370-2693(95)00337-K}{\textit{Phys. Lett.~B}}
 \textbf{351} (1995), 200--205, \href{https://arxiv.org/abs/hep-th/9502125}{arXiv:hep-th/9502125}.

\bibitem{16}
Mondrag\'on M., Nellen L., Schmidt M.G., Schubert C., Axial couplings on the
 world-line, \href{https://doi.org/10.1016/0370-2693(95)01392-X}{\textit{Phys. Lett.~B}} \textbf{366} (1996), 212--219,
 \href{https://arxiv.org/abs/hep-th/9510036}{arXiv:hep-th/9510036}.

\bibitem{Mueller:2019qqj}
Mueller N., Tarasov A., Venugopalan R., Deeply inelastic scattering structure
 functions on a hybrid quantum computer, \href{https://doi.org/10.1103/PhysRevD.102.016007}{\textit{Phys. Rev.~D}} \textbf{102}
 (2020), 016007, 15~pages, \href{https://arxiv.org/abs/1908.07051}{arXiv:1908.07051}.

\bibitem{34}
M\"uller U., Schubert C., A quantum field theoretical representation of
 {E}uler--{Z}agier sums, \href{https://doi.org/10.1155/S0161171202106028}{\textit{Int.~J. Math. Math. Sci.}} \textbf{31} (2002),
 127--148, \href{https://arxiv.org/abs/math.QA/9908067}{arXiv:math.QA/9908067}.

\bibitem{nietjoPRL}
Nieuwenhuis T., Tjon J.A., Nonperturbative study of generalized ladder graphs
 in a $\varphi^2 \chi$ theory, \href{https://doi.org/10.1103/PhysRevLett.77.814}{\textit{Phys. Rev. Lett.}} \textbf{77} (1996),
 814--817, \href{https://arxiv.org/abs/hep-ph/9606403}{arXiv:hep-ph/9606403}.

\bibitem{pharie}
Phan K.H., Riemann T., Scalar 1-loop {F}eynman integrals as meromorphic
 functions in space-time dimension~{$d$}, \href{https://doi.org/10.1016/j.physletb.2019.02.044}{\textit{Phys. Lett.~B}} \textbf{791}
 (2019), 257--264, \href{https://arxiv.org/abs/1812.10975}{arXiv:1812.10975}.

\bibitem{polbook}
Polyakov A.M., Gauge fields and strings, \textit{Contemporary Concepts in
 Physics}, Vol.~3, Harwood Academic Publishers, Chur, 1987.

\bibitem{41}
Schubert C., Perturbative quantum field theory in the string-inspired
 formalism, \href{https://doi.org/10.1016/S0370-1573(01)00013-8}{\textit{Phys. Rep.}} \textbf{355} (2001), 73--234,
 \href{https://arxiv.org/abs/hep-th/0101036}{arXiv:hep-th/0101036}.

\bibitem{strassler2}
Strassler M.J., Field theory without {F}eynman diagrams: a demonstration using
 actions induced by heavy particles, sLAC-PUB-5978, 1992, 65~pages,
 unpublished.

\bibitem{strassler1}
Strassler M.J., Field theory without {F}eynman diagrams: one-loop effective
 actions, \href{https://doi.org/10.1016/0550-3213(92)90098-V}{\textit{Nuclear Phys.~B}} \textbf{385} (1992), 145--184,
 \href{https://arxiv.org/abs/hep-ph/9205205}{arXiv:hep-ph/9205205}.

\bibitem{sascza}
Strassler M.J., Two-loop {Y}ang--{M}ills theory in the world-line formalism and
 an {E}uler--{H}eisenberg type action, \href{https://doi.org/10.1016/S0550-3213(00)00220-0}{\textit{Nuclear Phys.~B}} \textbf{579}
 (2000), 492--524, \href{https://arxiv.org/abs/hep-th/0003070}{arXiv:hep-th/0003070}.

\bibitem{tarven}
Tarasov A., Venugopalan R., Structure functions at small $x$ from world-lines:
 unpolarized distributions, \href{https://doi.org/10.1103/PhysRevD.100.054007}{\textit{Phys. Rev.~D}} \textbf{100} (2019), 054007,
 27~pages, \href{https://arxiv.org/abs/1903.11624}{arXiv:1903.11624}.

\end{thebibliography}
\end{document}